\newtheorem{theorem}{Theorem}
\newtheorem{lemma}{Lemma}
\newtheorem{remark}{\bf Remark}
\def\({\left(}
\def\){\right)}
\def\bee{{\mathbf{e}}}
\def\bff{{\mathbf{f}}}
\def\bk{{\mathbf{k}}}
\def\bq{{\mathbf{q}}}
\def\b0{{\mathbf{0}}}
\def\bC{{\mathbf{C}}}
\def\bD{{\mathbf{D}}}
\def\bW{{\mathbf{W}}}
\def\bmu{{\boldsymbol{\mu}}}
\def\bphi{{\boldsymbol{\phi}}}
\def\cD{\mathcal{D}}
\def\cS{\mathcal{S}}
\newcounter{protocol}
\newtheorem{MR}{Main Result}
\title{Semantic-Relevance Based Sensor Selection for Edge-AI Empowered Sensing Systems}
\author{Zhiyan Liu, \emph{Graduate Student Member, IEEE}, and Kaibin Huang, \emph{Fellow, IEEE}\thanks{Z. Liu and K. Huang are with Department of Electrical and Electronic Engineering at The University of Hong Kong, Hong Kong. Contact: K. Huang  (Email: huangkb@eee.hku.hk). }}
\newcommand{\removelatexerror}{\let\@latex@error\@gobble}
\begin{document}

\maketitle
\begin{abstract}
The \emph{sixth-generation} (6G) mobile network is envisioned to incorporate sensing and edge \emph{artificial intelligence} (AI) as two key functions. Their natural convergence leads to the emergence of \emph{Integrated Sensing and Edge AI} (ISEA), a novel paradigm enabling real-time  acquisition and understanding of sensory information at the network edge. However, ISEA faces a communication bottleneck due to the large number of sensors and the high dimensionality of sensory features. Traditional approaches to communication-efficient ISEA lack awareness of \emph{semantic relevance}, i.e., the level of relevance between sensor observations and the downstream task. To fill this gap, this paper presents a novel framework for semantic-relevance-aware sensor selection to achieve optimal \emph{end-to-end} (E2E) task performance under heterogeneous sensor relevance and channel states. E2E sensing accuracy analysis is provided to characterize the sensing task performance in terms of selected sensors' relevance scores and channel states. Building on the results, the sensor-selection problem for accuracy maximization is formulated as an integer program and solved through a tight approximation of the objective. The optimal solution exhibits a priority-based structure, which ranks sensors based on a priority indicator combining relevance scores and channel states and selects top-ranked sensors. Low-complexity algorithms are then developed to determine the optimal numbers of selected sensors and features. Experimental results on both synthetic and real datasets show substantial accuracy gain achieved by the proposed selection scheme compared to existing benchmarks.
\end{abstract}

\begin{IEEEkeywords}
Integrated sensing and edge AI, distributed sensing, split inference, sensor selection.
\end{IEEEkeywords}

\section{Introduction}

The envisioned \emph{sixth-generation} (6G) mobile networks are expected to encompass two innovative functions: sensing and edge \emph{artificial intelligence} (AI) \cite{ITUR2023}. The sensing function coordinates a large number of on-device sensors to achieve accurate collective perception of the physical world\cite{LetaiefCollabSense,DingzhuISCC}. In parallel, edge AI involves the widespread deployment of pre-trained AI models at the network edge, delivering low-latency intelligent services to mobile users\cite{Letaief2022JSAC}. The natural convergence of these functions gives rise to an emerging paradigm known as \emph{Integrated Sensing and Edge AI} (ISEA). ISEA provides a powerful platform for enabling a broad range of \emph{Internet-of-Things} (IoT) applications, such as smart cities, autonomous driving, cloud-controlled robotics, and digital twins\cite{liu2025integratedsensingedgeai}. However, a key design challenge in ISEA is the data uploading from numerous distributed sensors to a server for remote inference, which creates a communication bottleneck. Researchers have identified that the fundamental limitation of traditional design approaches lies in their lack of awareness of the relevance level between the semantic content of sensing data and the sensing task, termed semantic relevance\cite{LYC2020CVPR}. For example, in surveillance, sensors that do not detect humans provide no useful data for remote identification and tracking of intruders. Consequently, the traditional brute-force approach of uploading all sensing data clutters the air interface and degrades sensing performance due to irrelevant inputs. Conversely, communication overhead can be dramatically reduced without compromising sensing performance if only those sensors detecting relevant events, such as human intruders, are requested to upload their data. This highlights the need for sensor selection based on the relevance of their data to the target task. In this work, we present a novel framework for semantic-relevance aware sensor selection in an ISEA system. This framework systematically leverages the semantic matching of sensors to the target task to achieve high communication efficiency.

An ISEA system is typically implemented on the architecture of a \emph{Multi-View Convolutional Neural Network} (MVCNN)\cite{Hang2015ICCV}, also known as split inference\cite{Chen2023arxiv,Shao2022JSAC,Niu2019Infocom,Zhou2020IoTJ,Lan2023TWC}. This architecture consists of an edge server wirelessly connected to a cluster of distributed sensors that provide different observations of the same event or environment. To reduce communication overhead, a low-complexity \emph{neural network} (NN) model is deployed on each sensor to extract feature maps from its local observation. They are subsequently uploaded to the server for aggregation and remote inference using a large-scale NN model. The aggregation of multiple observations, known as multi-view pooling, can increase the sensing accuracy as the number of views grows\cite{Chen2023arxiv}. Efforts on overcoming the communication bottleneck caused by the wireless uploading of high-dimensional features from numerous sensors have given rise to a rapidly growing research area known as communication-efficient distributed sensing\cite{Shao2023TWC,JSCC_Imp2,Deniz2023WCL,ZWUltraLoLa}. The primary goal of relevant techniques is to optimize \emph{end-to-end} (E2E) performance for a given sensing task, focusing on metrics such as E2E latency and sensing accuracy. At the link level, reliable feature transmission can be made more efficient through task-oriented compression\cite{Shao2022JSAC,Niu2019Infocom,Zhou2020IoTJ} and progressive transmission\cite{Lan2023TWC}. From a multi-access perspective, researchers have developed various task-oriented schemes that jointly consider multi-access and feature aggregation. These include distributed information bottleneck\cite{Shao2023TWC}, importance-aware unequal error protection\cite{JSCC_Imp2}, distributed JSCC\cite{Deniz2023WCL}, and short packet transmission\cite{ZWUltraLoLa}. One promising class of techniques, called \emph{over-the-air computation} (AirComp), exploits the waveform superposition property of wireless channels to achieve simultaneous access and over-the-air feature fusion, thereby addressing the scalability limitation of traditional orthogonal access\cite{GX2019IOTJ,Huang2023TWC,Wen2023TWC,Dingzhu24TWC}. The mentioned existing work typically assumes that the set of active sensors have data that are uniformly useful for the given task and that their channel states are acceptable. However, when the number of sensors is large and radio resources are limited, selecting the appropriate sensors becomes crucial to ensure the system’s efficiency and reliability.

Sensor selection is a classic technique that has been extensively studied in the area of wireless sensor networks for distributed estimation of a common source from noisy observations. Due to the limited battery life of sensors and a constraint on radio resources, sensor selection is used as a mechanism to reduce communication energy consumption, thereby prolonging the network's lifetime. The primary approach involves utilizing the well-known concept of multiuser channel diversity to select sensors with favourable channel conditions for transmission\cite{knopp1995opp,dumb_antenna}. Researchers have theoretically characterized the relationship between the expected estimation error and the number of active sensors\cite{Leong2011TIT,Leong2011TSP}. Additionally, noise levels in observations have been considered in the development of greedy sensor selection schemes\cite{Feng2013TSP}. Other factors, such as residual battery energy, have also been incorporated into sensor-selection designs with the goal of further extending the network lifetime\cite{WSNLifetime1,WSNLifetime2}. To some extent, researchers have considered primitive forms of semantic relevance in sensor selection. For instance, efforts have been made to combine channel states and data values for sensor selection in linear estimation\cite{Nordio2015TSP}.

Recent advancements in AI algorithms and data analytics provide methods to characterize and compute semantic relevance in ISEA systems, thereby enhancing their efficiency and performance. In practice, the level of relevance between sensory observations and the sensing task can vary significantly across sensors and over time due to factors such as random movements, limited sensing coverage, and occlusions. For example, in a UAV-based person tracking task, some UAVs may have their views obstructed by buildings or may capture irrelevant individuals, leading to semantically irrelevant observations\cite{UAV_ReID}. Consequently, sourcing data from all sensors can not only incur excessive overhead but also degrade the performance of downstream tasks. This challenge has motivated computer scientists to design semantic matching algorithms for selecting semantically relevant sensors via a query-and-feedback protocol, which is also adopted in this work\cite{liu2020who2com,ReID_CVPR,SEMDAS}. Specifically, given a sensing task, an edge server encodes a low-dimensional semantic query vector to the sensors as a task description; each sensor then computes and feeds back a matching score for its local observations, which is used for sensor selection to participate in the sensing task. Leveraging these scores, various simple selection schemes have been designed, such as top-1 selection\cite{liu2020who2com,ReID_CVPR}, threshold-based selection\cite{LYC2020CVPR,AndersSemRA}, or top-$k$ selection\cite{ReID_TIP}. However, existing work typically treats the communication links as reliable bit pipes without considering channel fading. In the presence of fading, it is crucial to jointly consider channel states and available radio resources along with semantic relevance to optimize the ISEA system's E2E performance and efficiency. Otherwise, despite their semantic relevance, uploaded features can be severely distorted, leading to poor sensing accuracy. In this regard, the recent work closely related to the current study is\cite{QueryRRM_TMC}, where sensors are selected using a metric that considers the difference between the semantic relevance score and transmission energy cost.

Overall, a systematic study of optimal semantic-aware sensor selection for an ISEA system is still lacking. The main challenges are twofold. First, there are few results on mathematically characterizing the effects of semantic relevance on E2E sensing performance. Deriving such results is not straightforward, as it involves the interplay of communication and AI theories. Second, it remains unknown what the optimal strategy is for balancing the considerations of semantic relevance and favorable channel states to maximize E2E sensing accuracy. This work aims to address these challenges by presenting a framework for semantic-relevance-aware sensor selection. The framework is based on a modified version of the mentioned query-and-feedback protocol where the sensor feedback includes both semantic-relevance scores and channel states, the server performs both sensor selection and radio resource allocation, and a latency constraint is enforced targeting a mission-critical application. The main contributions and findings of this work are summarized as follows.
\begin{itemize}
    \item \textbf{E2E Sensing Accuracy Analysis.} For tractability, we adopt the popular \emph{Gaussian Mixture} (GM) model for feature distribution to derive a lower bound on the E2E sensing accuracy as its tractable surrogate. This is a function of the relevance scores of selected sensors and the number of transmitted features. The derivation is non-trivial and involves two main steps. First, the accuracy conditioned on the relevance of selected sensors is lower-bounded in closed-form using the classification margin theory. Second, the semantic relevance score is related to the posterior probability of each sensor's relevance. Based on these results, the derived E2E expected sensing accuracy reveals that the contribution of each sensor to accuracy is determined by its expected classification margin, which is an increasing function of its relevance score.
    \item \textbf{Optimal Semantic-Relevance Aware Sensor Selection. } With the objective of maximizing the preceding accuracy surrogate, the problem of optimal sensor selection, which is an integer program, is solved by a relaxation leveraging the monotonicity of expected classification margin with respect to the relevance score. The optimal solution is proved to have a priority-based structure—ranking sensors according to a derived priority indicator that considers both channel state and semantic relevance score, and selecting a set of top-ranked sensors to participate in the sensing task. This leads to a low-complexity scheme for semantic-relevance-aware sensor selection.
    \item \textbf{Experiments.} Extensive experiment results are provided to validate the performance of the proposed sensor selection scheme against benchmarks on both GM-distributed synthetic datasets and real datasets (i.e., ModelNet\cite{ModelnetPaper}). The results demonstrate superior E2E accuracy of the proposed scheme as opposed to both only channel-based and semantic-based selection schemes. 
\end{itemize}


\section{Models and Metrics}\label{sec: model_metrics}
Consider an ISEA system comprising a \emph{server} and $M$ \emph{sensors}, as illustrated in Fig.~\ref{fig: system}. The server receives  a query image of a target object and acquires relevant observations from sensors for accurate object classification. Due to the randomness in object locations and sensory views, only a subset of sensors capture views of the target object while others obtain irrelevant views. Thus, the server adopts the semantic-relevance based sensor selection protocol (see details in Section~\ref{sec: protocol}) to select relevant sensors. Relevant models and metrics are described in the following subsections. 

\begin{figure*}[t]
    \centering
\includegraphics[width=0.85\textwidth]{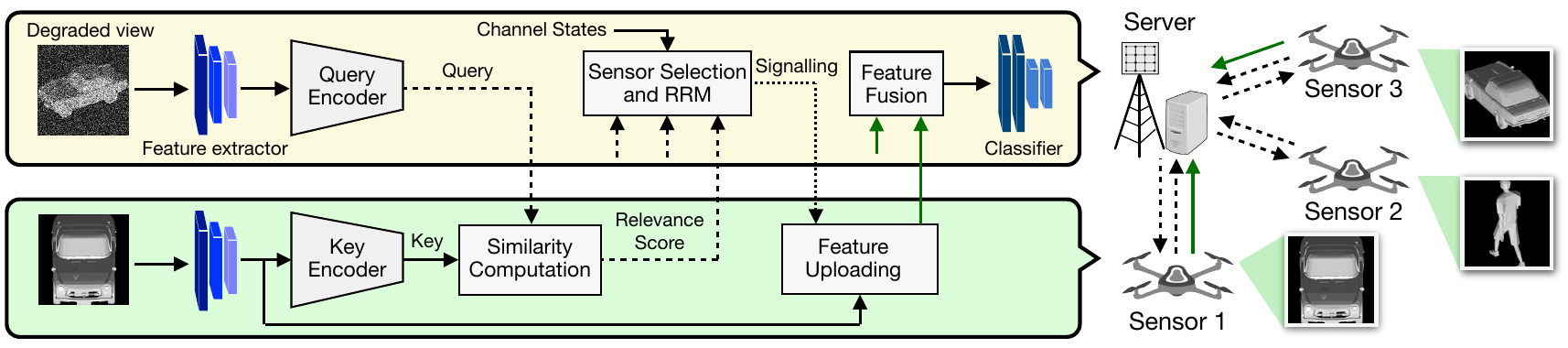}
    \caption{An ISEA system with semantic-relevance based sensor selection.}
    \label{fig: system}
\end{figure*}
\subsection{Sensing Model}
Multiple objects are present in the sensing region, each belonging to one of the $L$ object classes. The server receives in advance a query image of the target object, which is potentially degraded due to, e.g., hardware misfunctioning or occlusion. Each sensor observes one of the multiple objects depending on its current location and view angle. Let $I_m$ be an indicator of sensor relevance, where $I_m=1$ if sensor $m$ observes the target object and $I_m=0$ if it observes irrelevant objects. We assume independent observations across sensors and define $\pi_\mathsf{r}\triangleq \mathrm{Pr}(I_m=1)$ as the prior probability of each sensor observing the target object. Let $\ell_0$ denote the ground-truth class of the target object of interest to the server with a uniform prior distribution, i.e., $\mathrm{Pr}(\ell_0=\ell)=1/L$ for all $\ell = 1,\ldots,L$. It is assumed that other objects belong to classes different from $\ell_0$. Each sensor feeds its observation, typically an image of its observed object, into a feature extraction model to obtain a length-${D}$ feature vector $\mathbf{f}_m\in\mathbb{R}^D$. For tractability, we adopt the well-established GM model for multi-view sensing features\cite{Lan2023TWC,ZWUltraLoLa}. Under the GM model, each class, say class $\ell$, is represented in the feature space by a Gaussian distribution $\mathcal{N}(\bmu_\ell,\bC)$ where $\bmu_\ell\in\mathbb{R}^D$ denotes the $\ell$-th class centroid and $\bC\in\mathbb{R}^{{D}\times{D}}$ denotes a positive semi-definite covariance matrix. Without loss of generality, we assume that $\bC$ is diagonalized via \emph{principal component analysis} (PCA). Then, conditioned on the ground-truth class of the target object, $\ell_0$, the features obtained by sensor $m$ follows a GM distribution as
\begin{equation}\label{eq:feature_distribution}
\begin{split}
    &p(\bff_m|\ell_0) \\&= \mathrm{Pr}(I_m{=}\,1)p(\bff_m|I_m{=}\,1,\ell_0) + \mathrm{Pr}(I_m{=}\,0)p(\bff_m|I_m{=}\,0,\ell_0),
    \\&={\pi_\mathsf{r} \underbrace{\mathcal{N}(\bff_m|\bmu_{\ell_0},\bC)}_{\substack{\text{distribution of}\\ \text{relevant views}}}} + {{(1-\pi_\mathsf{r})} \underbrace{\sum_{\ell\neq\ell_0}\frac{1}{L-1}\mathcal{N}(\bff_m|\bmu_{\ell},\bC)}_{\text{distribution of irrelevant views}}}. 
\end{split}
\end{equation}
\subsection{Semantic Matching Model}\label{subsec: sem_match_model}
Intuitively, only features from sensors with $I_m=1$ contribute to recognizing the target object, while features from other sensors are redundant or even potentially harmful if incorporated into feature fusion. However, the ground-truth sensor relevance $I_m$ is not known to the server or sensors. An existing \emph{semantic matching} scheme is adopted to identify sensors with relevant observations\cite{LYC2020CVPR}. The procedure is described as follows. The server utilizes a query encoder, $G_\mathsf{q}(\cdot)$, to encode features of the query image, $\bff_0\in\mathbb{R}^D$, into a $D_\mathsf{q}$-dimensional semantic query vector. This resulting query vector is denoted as $\bq = G_\mathsf{q}(\bff_0)$. Similarly, each sensor uses a key encoder, $G_\mathsf{k}(\cdot)$, to encode its observation into a $D_\mathsf{q}$-dimensional key, denoted by $\bk = G_\mathsf{k}(\bff_m)$. Under the GM model, both the query and key encoders are defined as linear transformations, i.e., $\bq = \bW_\mathsf{q}\bff_0$ and $\bk_m = \bW_\mathsf{k}\bff_m$, where $\bW_\mathsf{q}\in\mathbb{R}^{D_\mathsf{q}\times D}$ and $\bW_\mathsf{k}\in\mathbb{R}^{D_\mathsf{q}\times D}$. The query is broadcast to all sensors for local semantic matching, wherein a semantic relevance score is computed as a function of the received query and the locally computed key. We use the widely adopted \emph{dot-product attention} to compute the semantic relevance score for each sensor $m$ as $\phi_m = \bq^T \bk_m$\cite{liu2020who2com,LYC2020CVPR,ReID_CVPR}.
It can be seen that semantic matching projects the query image and sensor observations into a semantic space, where a cosine-similarity score is computed between each query-key pair. The score $\phi_m$ indicates the relevance level of sensor $m$: a larger $\phi_m$ suggests a higher probability that sensor $m$ is relevant, i.e., $I_m=1$, and vice versa. All scores $\{\phi_m\}_{m=1}^M$ are fed back to the server for scheduling and resource allocation decisions. 

\subsection{Feature Pruning-and-Fusion Model}\label{subsec: feature_prune_fuse}
Features from multiple sensors are uploaded to the server and fused into a global feature map such that multi-view diversity is exploited to improve the classification accuracy\cite{Chen2023arxiv}.  Feature pruning is employed to ensure that feature uploading completes within the given latency constraint. Let $\cS\in\{1,\ldots,M\}$ represent the subset of sensors selected by the server for feature uploading. Under the server's coordination, each sensor in $\cS$ selects a subset of feature dimensions for uploading, denoted by $\tilde{\cD}\subseteq \{1,\ldots, D\}$. Given the number of selected features $|\tilde{\cD}|=\tilde{D}$, two feature ordering schemes can be utilized to specify $\tilde{\cD}$, as detailed below.
\begin{itemize}
    \item  {\bf Random ordering:} The server randomly selects $\tilde{D}$ dimensions from $\{1,\ldots,D\}$, resulting in the feature subset $\tilde{\cD}=\tilde{\cD}_{\sf rnd}$. This scheme is suitable when a lookup table for dimension-wise feature importance is not available due to the profiling overhead, or when features have approximately equal importance due to, e.g., feature whitening\cite{FeatWhitening}.
    \item  {\bf Importance ordering:} The server selects $\tilde{D}$ dimensions from $\{1,\ldots,D\}$ with top-$\tilde{D}$ feature importance level. This scheme requires the server to maintain a lookup table indicating the importance level of each feature dimension. For GM-distributed data, the importance level of the $d$-th dimension, denoted as $\bar{g}(d)$, is mathematically defined as
    \begin{equation}
        \bar{g}(d) = \frac{2}{L(L-1)}\sum_{\ell<\ell'\leq L}\frac{(\mu_{\ell,d}-\mu_{\ell',d})^2}{C_{d,d}},
    \end{equation}
    which is known as the average \emph{discriminant gain} (DG)\cite{Lan2023TWC}. It characterizes the class separability on dimension $d$, averaged over all class pairs. 
    With importance ordering, the resultant feature subset, $\tilde{\cD}=\tilde{\cD}_{\sf imp}$, consists of $\tilde{D}$ feature dimensions with the highest $\bar{g}(d)$ values.
    
\end{itemize}
Let $\tilde{\bff}_m$ denote the subvector of $\bff_m$ in a reduced feature space designated by $\tilde{\cD}$. Mathematically, define a down-sampling matrix $\tilde{\mathbf{D}}=[\bee_{r_1},\bee_{r_2},\ldots,\bee_{r_{\tilde{D}}}]^T\in \{0,1\}^{\tilde{D}\times D}$  where $\bee_{n}$ is a length-$D$ unit vector with $1$ in its $n$-th dimension, $r_i\in\tilde{\cD}$, and $r_i\leq r_j$ for all $i<j$. We then have $\tilde{\bff}_m = \tilde{\mathbf{D}}{\bff}_m$. The \emph{attentive fusion} scheme performs a weighted sum over the pruned local features, such that $\tilde{\bff}_m$ is assigned a larger weight if the relevance score $\phi_m$ is larger \cite{LYC2020CVPR}. The attentively fused feature is given by
\begin{equation}\label{eqn: attn_fusion}
    \tilde{\bff}_\mathsf{g} = \sum_{m\in\cS}{\omega}_m\tilde{\bff}_m,
\end{equation}
where ${\omega}_m\triangleq\frac{\exp(\phi_m/\tau)}{\sum_{m\in\cS}\exp(\phi_m/\tau)}$ is the \emph{softmax} attention weight and $\tau$ is a pre-determined temperature parameter controlling the smooth transition of $\{{w}_m\}_{m\in\cS}$ from binary to uniform
distributions. It is worth pointing out that average view-pooling is a special case of attentive fusion as $\tau\to \infty$.

\subsection{Communication Model}
The communication operations in the query-based ISEA system consist of downlink query broadcasting, semantic relevance score feedback, and feature uploading. Since the query is a low-dimensional vector and the relevance scores are scalars, the communication overhead of the first two operations is considered negligible. Consider a \emph{time-division multiple access} (TDMA) system for feature uploading\footnote{We acknowledge that AirComp can be utilized to achieve simultaneous feature uploading, thereby reducing multi-access latency\cite{Huang2023TWC}. However, in our context, the number of selected sensors is usually limited (e.g., one to three), resulting in a relatively small benefit from AirComp when compared to the synchronization overhead and hardware compatibility issues it presents.}. Time is divided into slots, each spanning a duration of $T$ seconds, which is chosen to meet the latency requirement of the ISEA task. Within each slot, time is allocated to the selected sensors, and the allocated time for sensor $m$ is denoted by $t_m$. We assume block fading such that the channel gain for sensor $m$, denoted as $h_m$, is constant during the entire slot. Define $P_m$ as the transmission power for sensor $m$. Then, its data rate for feature uploading, denoted as $r_m$, is given as
\begin{equation}
    r_m = B\log_2\left(1+\frac{P_m |h_m|^2}{N_0}\right),
\end{equation}
where $B$ and $N_0$ denote the uplink bandwidth and noise power, respectively. Each element of the feature vector is quantized in to $Q$ bits, which is sufficiently high such that quantization noise is negligible. Given the number of selected feature dimensions, $\tilde{D}$, the time allocated for sensor $m$ to upload its features is then given by $t_m = \frac{Q\tilde{D}}{r_m}$. The total communication time constraint is then given by
\begin{equation}\label{eq:comm_constraint}
    \sum_{m\in\mathcal{S}}\frac{Q\tilde{D}}{r_m}\leq T.
\end{equation}

\subsection{Inference Model and Metrics}
The server inputs the fused global feature vector $\tilde{\bff}_\mathsf{g}$ into a downstream classification model to infer the class of the target object. Two types of classifiers are considered and described as follows.
\begin{itemize}
    \item  {\bf Linear classifier:} For GM-distributed data in \eqref{eq:feature_distribution}, a linear \emph{maximum likelihood} (ML) classifier is employed to infer the object class, which is equivalent to the \emph{maximum a posteriori} (MAP) classifier due to uniform prior of the ground-truth class. Due to feature selection, the classification is based on decision boundaries in the reduced feature space defined by $\tilde{\cD}$. Let $\tilde{\bmu}_\ell\triangleq \tilde{\mathbf{D}}{\bmu}_\ell$ for all $\ell$ and $\tilde{\bC}\triangleq \tilde{\mathbf{D}}\bC$ denote the class centroids and data covariance in the reduced feature space, respectively. Then, the inferred class, denoted as $\hat{\ell}$, is given by
    \begin{equation}\label{eq:linear_classifier}
        \hat{\ell}=\mathop{\arg\max}_{\ell} \log p(\tilde{\bff}_\mathsf{g}|{\ell})=\mathop{\arg\min}_{\ell} z_{\ell} (\tilde{\bff}_\mathsf{g}),
    \end{equation}
    where $z_{\ell} (\tilde{\bff}_\mathsf{g})\triangleq(\tilde{\bff}_\mathsf{g}-\tilde{\bmu}_{\ell})^T \tilde{\mathbf{C}}^{-1}(\tilde{\bff}_\mathsf{g}-\tilde{\bmu}_{\ell})$ is the squared \emph{Mahalanobis distance} between $\tilde{\bff}_\mathsf{g}$ and $\tilde{\bmu}_{\ell}$ in the reduced feature space.
    \item  {\bf NN classifier:} For real-world data, $\tilde{\bff}_\mathsf{g}$ is input into an NN classifier consisting of, e.g., convolutional layers and feed-forward layers, to output the predicted class $\hat{\ell}$.
\end{itemize}

The E2E performance metric for the downstream classification task is \emph{expected sensing accuracy}, defined as the probability of correct classification averaged over random observations of sensors. This randomness arises from two sources, i.e., the uncertainty regarding the relevance of selected sensors, and the data noise in the feature space. The former can be characterized by the posterior probability of sensor relevance, which is related to the relevance score $\phi_m$ and denoted by $\pi_m\triangleq \mathrm{Pr}(I_m=1|\phi_m)$. The expected sensing accuracy as a function of $\cS$ and $\tilde{\cD}$, denoted by $A(\cS,\tilde{\cD}|\bphi)$, is then expressed as
\begin{align}\label{eq:exp_acc}
    A(\cS,\tilde{\cD}|\bphi)&=\frac{1}{L}\sum_{\ell_0=1}^L \mathrm{Pr}(\hat{\ell}=\ell_0|\ell_0,\bphi)\nonumber\\
    & = \frac{1}{L}\sum_{\ell_0=1}^L \mathbb{E}[\mathrm{Pr}(\hat{\ell}=\ell_0|\ell_0,\{I_m\})|\bphi],
\end{align}
where $\bphi\triangleq[\phi_1,\ldots,\phi_M]^T$.
Given one of the feature ordering schemes outlined in Section~\ref{subsec: feature_prune_fuse}, $\tilde{\cD}$ is fully determined once the number of selected features $\tilde{D}$ is specified. With a slight abuse of notations, the expected sensing accuracy can be expressed as a function of $\mathcal{S}$ and $\tilde{D}$:
\begin{equation}\label{eqn: def_exp_acc}
    A(\cS,\tilde{D}|\bphi)=\begin{cases}
        A(\cS,\tilde{\cD}_{\sf rnd}|\bphi), & \text{random ordering,}\\
        A(\cS,\tilde{\cD}_{\sf imp}|\bphi), & \text{importance ordering.}
    \end{cases}
\end{equation}

\begin{figure}[t]
    \centering
\includegraphics[width=0.28\textwidth]{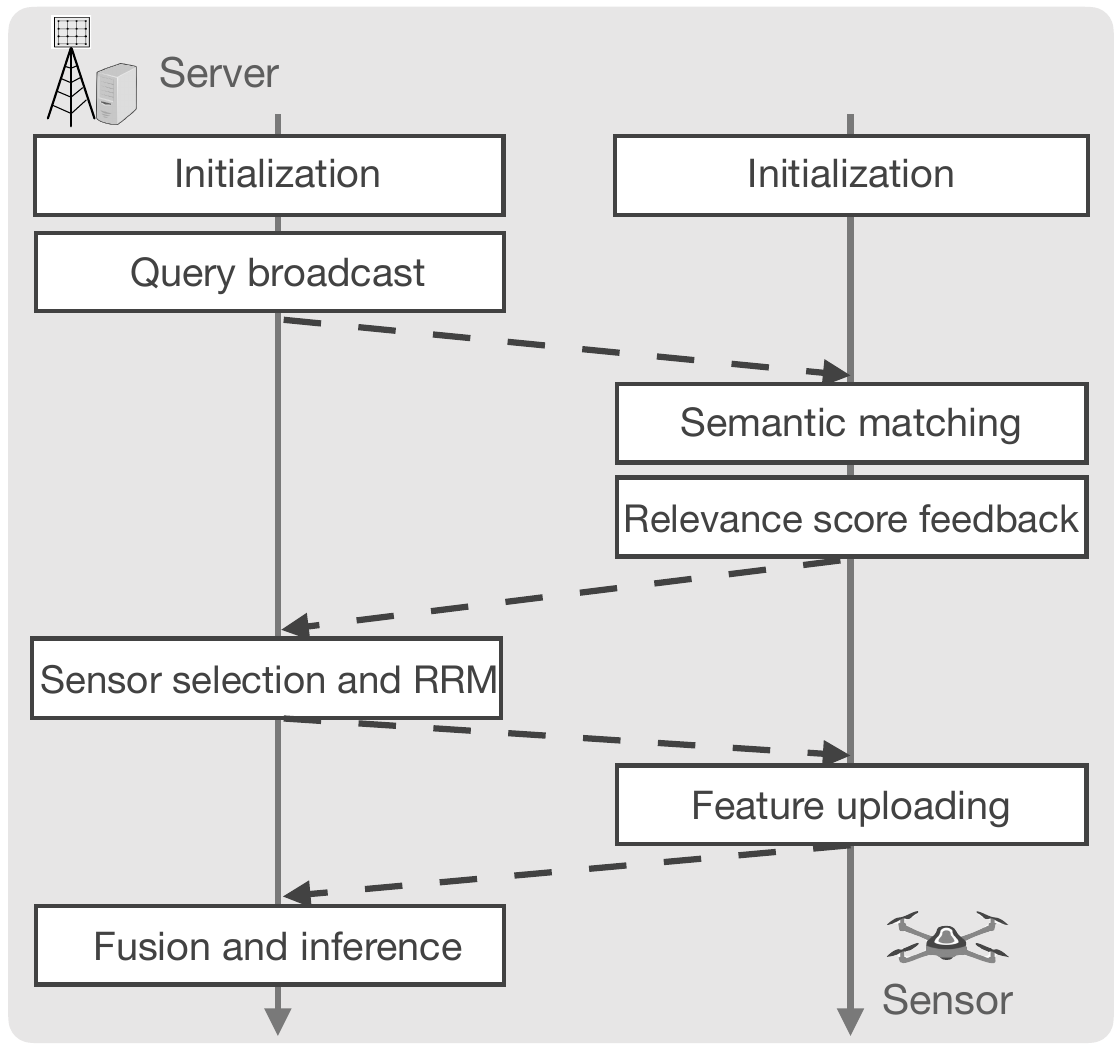}
    \caption{The proposed semantic-relevance based sensor selection protocol.}
    \label{fig: protocol}
\end{figure}

\section{Semantic-Relevance Based Sensor Selection Protocol}\label{sec: protocol}
The semantic-relevance based sensor selection protocol is executed for each sensing instance initiated by the server. The overview of the proposed protocol is illustrated in Fig.~\ref{fig: protocol}, and its detailed steps are listed sequentially below.
\subsubsection{Query broadcast} The server acquires a query image, encoding it into a semantic query vector $\bq=G_\mathsf{q}(\bff_0)$. The query is then broadcast to all $M$ sensors.
\subsubsection{Semantic matching and feedback}
Each sensor, say sensor $m$, captures a local observation $\bff_m$ with unknown relevance to the server's query. Upon receiving $\bq$, sensor $m$ encodes its observation into a key $\bk_m=G_\mathsf{k}(\bff_m)$, and computes the relevance score $\phi_m=\bq^T \bk_m$. Then, sensor $m$ feeds back the relevance score $\phi_m$ and its channel state $h_m$ to the server.
\subsubsection{Sensor selection and RRM} Based on the received relevance score $\{\phi_m\}$ and channel states $\{h_m\}$, the server determines a subset of sensors $\cS$,  the selected feature subset $\tilde{\cD}$, and the allocated uploading time $t_m$ for each $m\in \cS$. The decisions are sent to each selected sensor.
\subsubsection{Feature uploading} Each sensor in $\mathcal{S}$ prunes its local observation $\bff_m$, obtaining the pruned feature $\tilde{\bff}_m=\tilde{\bD}\bff_m$. Subsequently, it uploads $\tilde{\bff}_m$ within the timeslot allocated by the server. The total uploading time shall not exceed the latency constraint $T$, as otherwise the latency-sensitive task shall be considered as failed.
\subsubsection{Feature fusion and inference}
The server receives the uploaded features $\{\tilde{\bff}_m\}$, and performs attention fusion as described in \eqref{eqn: attn_fusion} to obtain the fused global feature $\tilde{\bff}_\mathsf{g}$. This fused global feature is then input into the downstream classifier to output the inferred class $\hat{\ell}$.

\section{E2E Sensing Accuracy Analysis}\label{sec: acc_analysis}
In this section, we analyze the expected sensing accuracy, $A(\cS,\tilde{D}|\bphi)$, in \eqref{eqn: def_exp_acc} achieved by selecting a subset of sensors, $\cS$, and number of features, $\tilde{D}$, for the linear classifier in \eqref{eq:linear_classifier}. The exposition hinges on two key results presented in Section~\ref{subsec: conditional_acc} and Section~\ref{subsec: prob_relevance}, respectively. First, a lower bound is first provided for the sensing accuracy conditioned on sensor relevance, i.e., $\{I_m\}$. Then, explicit expressions are derived for the posterior probability of sensor relevance, $\mathrm{Pr}(I_m=1|\phi_m)$. Finally, we combine the two results to derive our main result on $A(\cS,\tilde{D}|\bphi)$ for both random and importance feature ordering in Section~\ref{subsec: exp_sensing_acc}.

\subsection{Sensing Accuracy Conditioned on Sensor Relevance}\label{subsec: conditional_acc}
We first consider the sensing accuracy conditioned on sensor relevance $\{I_m\}$, which is a function of selected sensors $\cS$ and feature dimensions $\tilde{\cD}$. The conditional accuracy is defined as
\begin{equation}
     A(\cS,\tilde{\cD}|\{I_m\}) = \frac{1}{L}\sum_{\ell_0=1}^L \mathrm{Pr}(\hat{\ell}=\ell_0|\ell_0,\{I_m\}).
\end{equation}
To analyze this accuracy, we start by examining the distribution of the aggregated feature $\tilde{\bff}_\mathsf{g}=\sum_{m\in\cS}{\omega}_m\tilde{\bff}_m$ conditioned on the ground-truth class $\ell_0$ and sensor relevance $\{I_m\}$. Consider the component from sensor $m$ in $\tilde{\bff}_\mathsf{g}$. When $I_m=1$, the feature uploaded by sensor $m$, $\tilde{\bff}_m$, follows a Gaussian distribution $\tilde{\bff}_m\sim\mathcal{N}(\tilde{\bmu}_{\ell_0},\tilde{\mathbf{C}})$, where $\tilde{\bmu}_{\ell_0}$ is the ground-truth class centroid. On the other hand, when $I_m=0$, $\tilde{\bff}_m$ is distributed as $\tilde{\bff}_m\sim\mathcal{N}(\tilde{\bmu}_{\ell_m},\tilde{\mathbf{C}})$, with $\tilde{\bmu}_{\ell_m}$ being the centroid of sensor $m$'s observed class, different from ${\ell_0}$.
Conditioned on the observed class of irrelevant sensors $\{\ell_m\}_{I_m=0}$, the aggregated feature $\tilde{\bff}_\mathsf{g}$ is the sum of $|\cS|$ Gaussian random variables, which is also Gaussian distributed as
\begin{equation}\label{eq:cond_agg_feature_dist}
\begin{split}
    p&(\tilde{\bff}_\mathsf{g}|\ell_0,\{I_m\},\{\ell_m\})\\& = \mathcal{N}\Bigg(\tilde{\bff}_\mathsf{g}\Bigg|\underbrace{\rho\tilde{\bmu}_{\ell_{0}}}_{\substack{\text{relevant}\\ \text{sensors}}}+\underbrace{\sum_{m\in\mathcal{S}}(1-I_m)w_m\tilde{\bmu}_{\ell_{m}}}_{\text{irrelevant sensors}},\eta\tilde{\mathbf{C}}\Bigg.\Bigg),
\end{split}
\end{equation}
where $\rho\triangleq \sum_{m\in\mathcal{S}}I_m{w}_{m}\in [0,1]$ is the proportion of ground-truth feature mean in the aggregated feature mean, and $\eta\triangleq\sum_{m\in\mathcal{S}}{w}_{m}^2\in \left[{|\cS|}^{-1},1\right]$ is a denoising factor characterizing data-noise reduction from multi-view aggregation. When $\cS$ includes irrelevant sensors, i.e., $\{m|m\in\cS,I_m=0\}\neq\emptyset$, the aggregated feature mean deviates from the ground-truth mean $\tilde{\bmu}_{\ell_0}$, which intuitively degrades the sensing accuracy. This effect is quantified in the following theorem, which establishes a lower bound on the conditional sensing accuracy. 
\begin{theorem}
    \emph{\emph{(Lower Bound on Conditional Accuracy)} Given selected sensors, $\cS$ and selected features, $\tilde{\cD}$, the classification accuracy conditioned on the sensor relevance, $A(\cS,\tilde{\cD}|\{I_m\})$, is lower bounded by }
     \begin{align}\label{eq: lb_cond_acc}
        A&(\cS,\tilde{\cD}|\{I_m\})\nonumber\\&\geq 1-(L-1)Q\left[\frac{1}{\sqrt{\eta}}\biggl(\frac{\sqrt{\tilde{G}_{\min}}}{2}-2(1-\rho)\tilde{\delta}_{\max} \biggr)\right],\nonumber\\
        &\triangleq A_{\sf lb}(\cS,\tilde{\cD}|\rho),
    \end{align}
    \emph{where $\tilde{G}_{\min}\triangleq\min_{\ell\neq\ell'}\Vert\tilde{\bmu}_{\ell}-\tilde{\bmu}_{\ell'}\Vert^2_{\tilde{\mathbf{C}}}=\min_{\ell\neq\ell'}\sum_{d\in\tilde{\mathcal{D}}}\frac{(\mu_{\ell,d}-\mu_{\ell',d})^2}{C_{d,d}}$ is the minimum Mahalanobis distance between centroids of all class pairs, also known as the minimum pairwise DG\cite{Lan2023TWC}; $\tilde{\delta}_{\max}\triangleq \max_{\ell}\Vert\tilde{\mu}_{\ell}\Vert_{\tilde{\mathbf{C}}}=\max_{\ell}\sqrt{\sum_{d\in\hat{\mathcal{D}}}\frac{\mu^2_{\ell,d}}{C_{d,d}}}$ is the maximum of Mahalanobis norms of all class centroids.}
\end{theorem}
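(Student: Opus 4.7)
The plan is to establish the lower bound via the standard union-bound reduction of the multi-class error probability to pairwise error events, followed by a Gaussian tail computation that cleanly isolates the inter-class separation from the bias injected by irrelevant sensors. First, I would fix a ground-truth class $\ell_0$ and use the union bound on the ML decision rule in \eqref{eq:linear_classifier} to write $1 - \mathrm{Pr}(\hat{\ell} = \ell_0 \,|\, \ell_0, \{I_m\}) \leq \sum_{\ell\neq\ell_0}\mathrm{Pr}(z_\ell(\tilde{\bff}_\mathsf{g}) < z_{\ell_0}(\tilde{\bff}_\mathsf{g}) \,|\, \ell_0, \{I_m\})$. Because all Gaussian components in the GM share the covariance $\tilde{\mathbf{C}}$, the quadratic terms in $z_\ell - z_{\ell_0}$ cancel, so each pairwise error event reduces to a linear half-space of $\tilde{\bff}_\mathsf{g}$ whose probability under the conditional Gaussian law in \eqref{eq:cond_agg_feature_dist} is a single $Q$-function.

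Next, writing the conditional mean of $\tilde{\bff}_\mathsf{g}$ as $\bmu_{\mathrm{eff}} = \rho\tilde{\bmu}_{\ell_0} + (1-\rho)\tilde{\bmu}_{\mathrm{irr}}$, where $\tilde{\bmu}_{\mathrm{irr}}$ is the convex combination of irrelevant-sensor centroids with normalized weights $\{w_m/(1-\rho)\}_{m:I_m=0}$, and defining $\Delta_\ell \triangleq \tilde{\bmu}_\ell - \tilde{\bmu}_{\ell_0}$ together with the bias term $b_\ell \triangleq (\bmu_{\mathrm{eff}} - \tilde{\bmu}_{\ell_0})^T\tilde{\mathbf{C}}^{-1}\Delta_\ell$, a direct computation expresses each pairwise error probability as $Q\!\left(\frac{1}{\sqrt{\eta}}\bigl(\frac{1}{2}\Vert\Delta_\ell\Vert_{\tilde{\mathbf{C}}} - b_\ell/\Vert\Delta_\ell\Vert_{\tilde{\mathbf{C}}}\bigr)\right)$. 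The Cauchy--Schwarz inequality in the $\tilde{\mathbf{C}}^{-1}$-weighted inner product yields $|b_\ell|/\Vert\Delta_\ell\Vert_{\tilde{\mathbf{C}}} \leq (1-\rho)\Vert\tilde{\bmu}_{\mathrm{irr}} - \tilde{\bmu}_{\ell_0}\Vert_{\tilde{\mathbf{C}}}$, and the triangle inequality combined with the observation that $\tilde{\bmu}_{\mathrm{irr}}$ is a convex combination of class centroids (so $\Vert\tilde{\bmu}_{\mathrm{irr}}\Vert_{\tilde{\mathbf{C}}} \leq \tilde{\delta}_{\max}$) bounds this by $2(1-\rho)\tilde{\delta}_{\max}$.

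Finally, combining $\Vert\Delta_\ell\Vert_{\tilde{\mathbf{C}}} \geq \sqrt{\tilde{G}_{\min}}$ with the bias bound and invoking the monotonicity of $Q(\cdot)$ would yield an upper bound of $(L-1)Q\!\left(\frac{1}{\sqrt{\eta}}\bigl(\frac{\sqrt{\tilde{G}_{\min}}}{2} - 2(1-\rho)\tilde{\delta}_{\max}\bigr)\right)$ on the conditional misclassification probability, which is equivalent to the claimed lower bound. The main subtlety will lie in propagating the sign of $b_\ell$ correctly: Cauchy--Schwarz only controls $|b_\ell|$, while the argument of $Q$ depends on the signed value, so I would verify that substituting $-|b_\ell|$ for $b_\ell$ is the conservative choice over all realizations of irrelevant-class assignments $\{\ell_m\}_{I_m=0}$ consistent with a given $\rho$, and confirm that the resulting bound remains valid (though possibly vacuous) when the argument of $Q$ becomes negative in the high $(1-\rho)$ regime.
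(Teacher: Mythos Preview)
Your proposal is correct and follows essentially the same route as the paper: union bound to pairwise errors, exact Gaussian tail for each linear comparison $z_\ell - z_{\ell_0}$, then Cauchy--Schwarz plus the triangle inequality to separate the intrinsic margin $\Vert\Delta_\ell\Vert_{\tilde{\mathbf{C}}}/2$ from the bias term and bound the latter by $2(1-\rho)\tilde{\delta}_{\max}$. The only cosmetic differences are that the paper works directly with the mean $\bmu_\delta$ of $z_{\ell_0}-z_{\ell'}$ rather than your $b_\ell$ (they are the same quantity up to sign and a factor of two), and it bounds $\Vert\Delta\Vert_{\tilde{\mathbf{C}}}$ for $\Delta=\sum_{I_m=0}w_m\tilde{\bmu}_{\ell_m}$ directly instead of introducing the normalized convex combination $\tilde{\bmu}_{\mathrm{irr}}$; your handling of the sign of $b_\ell$ and the conditioning on $\{\ell_m\}$ is already what the paper does implicitly.
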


\begin{proof}
    (See Appendix A.)
\end{proof}
\begin{figure}[t]
    \centering
\includegraphics[width=0.485\textwidth]{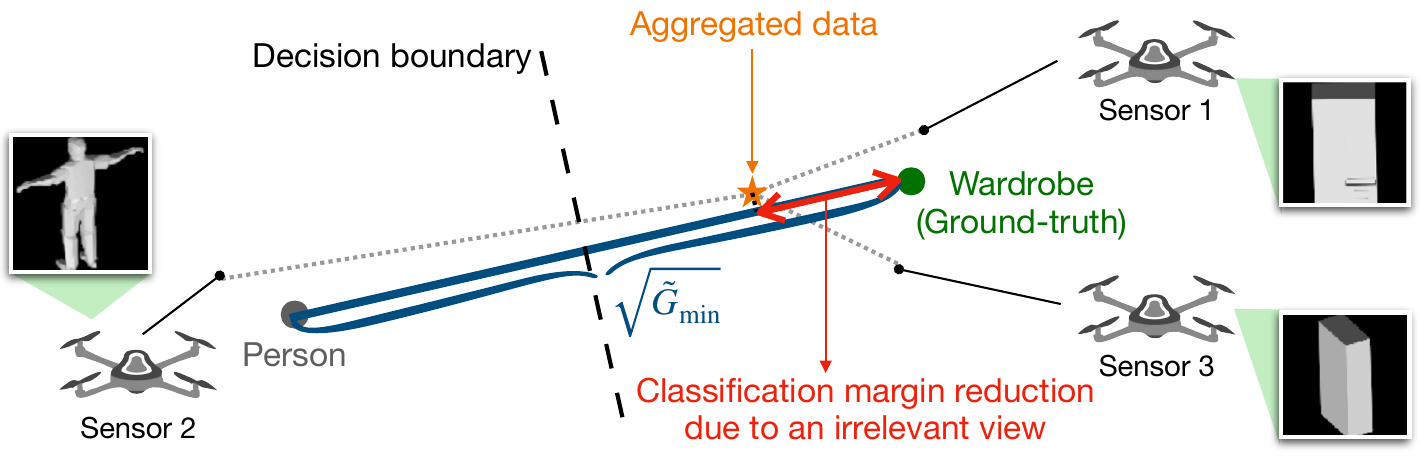}
    \caption{Illustration of classification margin reduction incurred by irrelevant views in a binary-class case. Therein, the red arrow reflects the margin reduction, which is quantified by $2(1-\rho)\tilde{\delta}_{\max}$ in Theorem 1.}
    \label{fig: margin_reduction}
\end{figure}
Theorem~1 can be interpreted as follows. Inside the Q function, $\frac{\sqrt{\tilde{G}_{\min}}}{2}$ is the minimum Mahalanobis distance from a class centroid to the decision boundary, indicating the inherent classification margin determined by separability of classes in the feature space. Sensor relevance influences the accuracy through a penalty term, $2(1-\rho)\tilde{\delta}_{\max}$, which is subtracted from the inherent classification margin, $\frac{\sqrt{\tilde{G}_{\min}}}{2}$. This term is essentially the maximum distance by which the irrelevant components can pull the aggregated feature mean towards the decision boundary, as illustrated by a red arrow in Fig.~\ref{fig: margin_reduction}. This reduction in margin decreases the probability of correct classification. When all sensors are relevant, we have $\rho=1$, and thus the penalty term becomes zero, preserving the inherent classification margin. On the other hand, the denoising factor $\eta\leq 1$ quantifies the variance reduction achieved through  multi-view aggregation. Incorporating a larger number of relevant sensors in aggregation can result in a smaller $\eta$, indicating that the aggregated feature is more concentrated around its mean. This improves the accuracy by reducing the probability of crossing the decision boundary.

\subsection{Posterior Probability of Sensor Relevance}\label{subsec: prob_relevance}
To derive the expected sensing accuracy, $A(\cS,\tilde{\cD}|\bphi)$, from the conditional accuracy, it is necessary to characterize the posterior probability of sensor relevance, i.e., $\pi_m \triangleq \mathrm{Pr}(I_m=1|\phi_m)$. Based on the semantic matching model in Section~\ref{subsec: sem_match_model}, the exact expression of posterior probability for GM-distributed data is given by the following lemma. 
\begin{lemma}
\emph{The posterior probability of sensor relevance, $\pi_m= \mathrm{Pr}(I_m=1|\phi_m)$, is given by}
    \begin{equation}\label{eqn: post_prob}
    \pi_m =\frac{1}{1+\frac{1-\pi_\mathsf{r}}{\pi_\mathsf{r}(L-1)}\sum_{\ell\neq \ell_0}\exp\left[-\frac{\alpha_{\ell_0,\ell}(\phi_m-\bar{\phi}_{\ell_0,\ell})}{\sigma_\mathsf{s}^2}\right]},
\end{equation}
\emph{where $\alpha_{\ell_0,\ell}\triangleq\bq^T\bW_\mathsf{k}(\bmu_{\ell_0}-\bmu_{\ell})$ is a decay factor as the average difference in relevance scores between relevant and irrelevant cases, $\bar{\phi}_{\ell_0,\ell} \triangleq \frac{\bq^T\bW_\mathsf{k}(\bmu_{\ell_0}+\bmu_{\ell})}{2}$ is the mid-point in relevance scores between the two cases, and $\sigma_\mathsf{s}^2=\bq^T\bW_{\mathsf{k}}^T\bC\bW_{\mathsf{k}}\bq$ is the relevance-score variance.}
\end{lemma}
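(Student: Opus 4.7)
The plan is to derive $\pi_m = \mathrm{Pr}(I_m{=}1|\phi_m)$ via a direct Bayes-rule calculation, using the fact that the relevance score is a linear functional of a conditionally Gaussian feature vector, so its conditional densities under $I_m=1$ and $I_m=0$ are a single Gaussian and an equal-weight mixture of $(L{-}1)$ Gaussians, respectively, all sharing a common variance. The final step is purely algebraic: completing the square inside the likelihood ratio to turn the quadratic exponents into the affine form $-\alpha_{\ell_0,\ell}(\phi_m-\bar{\phi}_{\ell_0,\ell})/\sigma_\mathsf{s}^2$ that appears in the claimed expression.

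First I would invoke Bayes' rule (conditioned on the target class $\ell_0$, which is fixed by the query image) to write
\begin{equation*}
\pi_m \;=\; \frac{\pi_\mathsf{r}\,p(\phi_m\mid I_m{=}1,\ell_0)}{\pi_\mathsf{r}\,p(\phi_m\mid I_m{=}1,\ell_0)+(1{-}\pi_\mathsf{r})\,p(\phi_m\mid I_m{=}0,\ell_0)} \;=\; \frac{1}{1+\dfrac{1-\pi_\mathsf{r}}{\pi_\mathsf{r}}\,\Lambda(\phi_m)},
\end{equation*}
where $\Lambda(\phi_m)\triangleq p(\phi_m\mid I_m{=}0,\ell_0)/p(\phi_m\mid I_m{=}1,\ell_0)$ is the likelihood ratio. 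Next, I would exploit $\phi_m=\bq^{T}\bW_\mathsf{k}\bff_m$: since $\bff_m\mid\ell_0,I_m$ is conditionally Gaussian (with mean $\bmu_{\ell_0}$ when $I_m{=}1$ and a uniform mixture over $\bmu_\ell$, $\ell\neq\ell_0$, when $I_m{=}0$) and has common covariance $\bC$, applying this linear map yields scalar Gaussians with common variance $\sigma_\mathsf{s}^2$ and means $\bq^{T}\bW_\mathsf{k}\bmu_\ell$. Consequently,
\begin{equation*}
p(\phi_m\mid I_m{=}1,\ell_0)=\mathcal{N}\!\left(\phi_m\,\big|\,\bq^{T}\bW_\mathsf{k}\bmu_{\ell_0},\sigma_\mathsf{s}^2\right),\qquad p(\phi_m\mid I_m{=}0,\ell_0)=\frac{1}{L-1}\sum_{\ell\neq\ell_0}\mathcal{N}\!\left(\phi_m\,\big|\,\bq^{T}\bW_\mathsf{k}\bmu_{\ell},\sigma_\mathsf{s}^2\right).
\end{equation*}

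The remaining step is algebraic. Since all Gaussian components share $\sigma_\mathsf{s}^2$, the normalization constants cancel term-by-term in $\Lambda(\phi_m)$, leaving
\begin{equation*}
\Lambda(\phi_m)=\frac{1}{L-1}\sum_{\ell\neq\ell_0}\exp\!\left[-\frac{(\phi_m-\bq^{T}\bW_\mathsf{k}\bmu_\ell)^2-(\phi_m-\bq^{T}\bW_\mathsf{k}\bmu_{\ell_0})^2}{2\sigma_\mathsf{s}^2}\right].
\end{equation*}
Expanding the two squares and factoring the difference as $(a-b)(a+b-2\phi_m)$ with $a=\bq^{T}\bW_\mathsf{k}\bmu_\ell$ and $b=\bq^{T}\bW_\mathsf{k}\bmu_{\ell_0}$, the exponent collapses to $-\alpha_{\ell_0,\ell}(\phi_m-\bar{\phi}_{\ell_0,\ell})/\sigma_\mathsf{s}^2$ with the $\alpha_{\ell_0,\ell}$ and $\bar{\phi}_{\ell_0,\ell}$ given in the statement. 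Substituting back into the Bayes expression yields \eqref{eqn: post_prob}.

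The only mildly subtle point, rather than an obstacle, is the bookkeeping of conditioning on $\ell_0$: the posterior is expressed relative to the target class carried by the query, and the sum runs over $\ell\neq\ell_0$ because irrelevant sensors are assumed to observe classes distinct from $\ell_0$. Everything else is a one-shot Bayes calculation and a quadratic-to-affine simplification, so I would not expect any significant obstacle.
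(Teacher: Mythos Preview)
Your proposal is correct and follows essentially the same approach as the paper's proof: Bayes' rule, identification of the conditional densities of $\phi_m$ as a single Gaussian versus an equal-weight $(L{-}1)$-component Gaussian mixture sharing variance $\sigma_\mathsf{s}^2$, and substitution. In fact you make explicit the algebraic step (the quadratic-to-affine simplification in the exponent) that the paper leaves implicit, so your write-up is slightly more detailed but otherwise identical in spirit.
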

\begin{proof}
    (See Appendix B.)
\end{proof}

Three key properties can be observed from the expression of $\pi_m$. First, when $\alpha_{\ell_0,\ell}\geq 0$ for all $\ell\neq \ell_0$, $\pi_m$ is an increasing function in $\phi_m$, indicating that a higher semantic relevance score leads to a higher probability of semantic relevance. The condition, $\alpha_{\ell_0,\ell}\geq 0$ for all $\ell\neq \ell_0$, can be interpreted as \emph{query effectiveness}, as it requires the query to be more closely aligned with the relevant data centroid than with the irrelevant centroid in the semantic matching space. Second, as the query aligns more closely with relevant samples in the semantic matching space, the values of ${\alpha_{\ell_0,\ell}}$ increase, causing $\pi_m$ to transition more rapidly from 0 to 1 as the relevance score increases. Third, the value of $\pi_m$ is invariant to linear scaling of either the query or key encoders, as both the numerator and denominator in the exponential functions are quadratic in both $\bW_\mathsf{q}$ and $\bW_\mathsf{k}$. This conforms with the intuition that applying a linear scaling factor to all semantic relevance scores should not change the implied probability of relevance.

However, the server cannot directly evaluate $\pi_m$  due to lack of knowledge of $\ell_0$. Thus we propose the following estimation based on mean approximation. Specifically, the decay factor $\alpha_{\ell_0,\ell}$ is approximated by $\bar{\alpha}$, which is given by
\begin{equation}
    \bar{\alpha}=\mathbb{E}_{\bq,{\ell_0}}\biggl[\frac{1}{L-1}\sum_{\ell\neq\ell_0}\bq^T\bW_\mathsf{k}(\bmu_{\ell_0}-\bmu_{\ell})\biggr].
\end{equation}
It can be seen that $\bar{\alpha}$ is the global average relevance-score difference between relevant and non-relevant cases over all possibilities of $\ell_0$, which can be estimated empirically from training data. Similarly, $\phi_{\ell_0,\ell}$ can be approximated with $\bar{\phi}$, which is given by
\begin{equation}
    \bar{\phi} = \mathbb{E}_{\bq,{\ell_0}}\biggl[\frac{1}{2(L-1)}\sum_{\ell\neq\ell_0}\bq^T\bW_\mathsf{k}(\bmu_{\ell_0}+\bmu_{\ell})\biggr].
\end{equation}
Last, the relevance-score $\sigma_{\mathsf{s}}^2$ is approximated by $\bar{\sigma}_{\mathsf{s}}^2=\mathbb{E}_\bq[\bq^T\bW_{\mathsf{k}}\bC\bW_{\mathsf{k}}\bq]$ over the query distribution. Substituting the approximations into \eqref{eqn: post_prob} yields the final estimation of $\pi_m$, denoted as $\hat{\pi}_m$, which is a scaled \emph{sigmoid} function of the received relevance score $\phi_m$ given by
\begin{equation}\label{eqn: approx_post_prob}
    \pi_m \approx \hat{\pi}_m\triangleq\frac{1}{1+\frac{1-\pi_\mathsf{r}}{\pi_\mathsf{r}}\exp\left[-\frac{\bar{\alpha}(\phi_m-\bar{\phi})}{\bar{\sigma}_\mathsf{s}^2}\right]}.
\end{equation}

\begin{figure}[t]
    \centering
\includegraphics[width=0.28\textwidth]{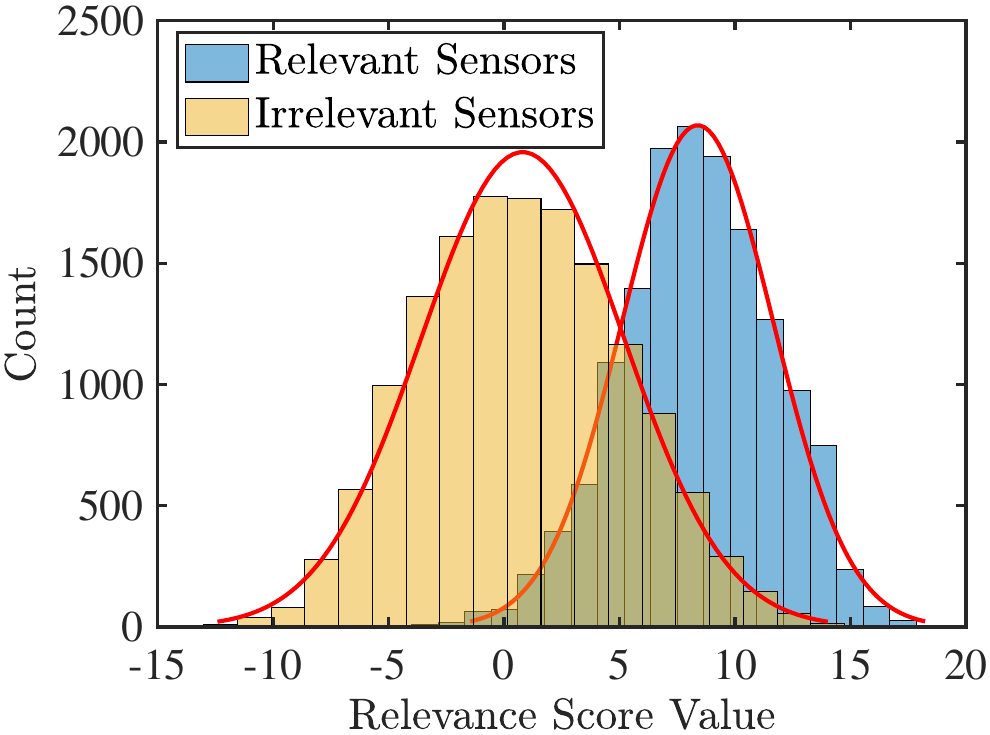}
    \caption{Distributions of relevance scores on the ModelNet training dataset fitted by Gaussian probability density functions.}
    \label{fig: matching_stat}
\end{figure}

\begin{remark}
    (Precision of Probability Estimation) \emph{The approximation in \eqref{eqn: approx_post_prob} is exact if the scores of relevant and irrelevant cases are respectively Gaussian distributed as $\mathcal{N}(\bar{\phi}_\mathsf{r},\bar{\sigma}_\mathsf{s}^2)$ and $\mathcal{N}(\bar{\phi}_\mathsf{ir},\bar{\sigma}_\mathsf{s}^2)$, with $\bar{\alpha}=\bar{\phi}_\mathsf{r}-\bar{\phi}_\mathsf{ir}>0$. As illustrated in Fig.~\ref{fig: matching_stat}, using semantic encoders trained for real dataset, i.e., ModelNet \cite{LYC2020CVPR}, the distribution of relevance scores is well approximated as conditional Gaussian. The estimated mean and variance of relevance scores on training sets can be utilized to compute \eqref{eqn: approx_post_prob} in the deployment stage. This validates the precision of the posterior probability estimation. }
\end{remark}

\subsection{Expected Sensing Accuracy}\label{subsec: exp_sensing_acc}
Next, we derive the expected sensing accuracy with results from the preceding two subsections. Using the total expectation formula, the expected sensing accuracy is expressed as
\begin{align}
    A(\cS,\tilde{\cD}|\bphi) &= \mathbb{E}_{\{I_m\}|\bphi}[A(\cS,\tilde{\cD}|\{I_m\})|\bphi]\nonumber\\
    &\geq \mathbb{E}_{\rho|\bphi}\left[\left.A_{\sf lb}(\cS,\tilde{D}|\rho)\right|\bphi\right]\\&\triangleq A_{\mathsf{lb}}(\cS,\tilde{\cD}|\bphi).\nonumber
\end{align}
Recall that the random proportion of relevant sensors, $\rho$, is defined as $\rho 
    =\sum_{m\in\mathcal{S}} I_m{w}_{m}$.
Conditioned on $\bphi$, each $I_m$ is a Bernoulli random variable with parameter $\pi_m$, and thus $\rho$ is a sum of $|\cS|$ Bernoulli independent variables. Its conditional expectation is therefore given by
\begin{equation}
    \mathbb{E}[\rho|\bphi] = \sum_{m\in\mathcal{S}} {w}_{m}\mathbb{E}[I_m|\phi_m]=\sum_{m\in\mathcal{S}} {w}_{m}\pi_m.
\end{equation}
For tractability, we adopt an approximation $ A_{\mathsf{lb}}(\cS,\tilde{\cD}|\bphi)\approx \hat{A}_{\mathsf{lb}}(\cS,\tilde{\cD}|\bphi)$ by first-order Taylor expansion around the mean of $\rho$, which can be expressed as
\begin{equation}\label{eqn: approx_exp_acc}
\begin{split}
    \hat{A}&_{\mathsf{lb}}(\cS,\tilde{\cD}|\bphi)=
    1-(L-1)\cdot\\&Q\left[\frac{1}{\sqrt{\eta}}\biggl(\frac{\sqrt{\tilde{G}_{\min}}}{2}-2\biggl(1-\sum_{m\in\mathcal{S}} {w}_{m}\hat{\pi}_m\biggr)\tilde{\delta}_{\max} \biggr)\right].
    \end{split}
\end{equation}

The above closed-form approximation still depends on the specific subset of selected features, $\tilde{\mathcal{D}}$. We now proceed to establish this approximation as a function of the number of selected features, $\tilde{D}$, considering random and importance feature ordering. First, consider the random-ordering case, where $\tilde{\mathcal{D}}=\tilde{\cD}_{\sf rnd}$ is a randomly selected size-$\tilde{D}$ subset of $\{1,\ldots,D\}$. The selection of $\tilde{\cD}_{\sf rnd}$ affects $\hat{A}_{\mathsf{lb}}(\cS,\tilde{\cD}_{\sf rnd}|\bphi)$ through $\tilde{G}_{\min}$ and $\tilde{\delta}_{\max}$, i.e., the minimum pairwise DG and maximum feature norm on the randomly selected feature dimensions. Given that the number of features is typically large in practice (e.g., $>1000$), we derive the asymptotic behavior of $\tilde{G}_{\min}$ and $\tilde{\delta}_{\max}$ as the number of features grows, which is presented in the following lemma. The proof, based on the well-known \emph{strong law of large numbers} and the \emph{continuous mapping theorem}\cite{van2000asymptotic}, is omitted here for brevity.
\begin{lemma}
    \emph{Let $\beta=\frac{\tilde{D}}{D}$ denote a constant feature pruning ratio. Assume respective i.i.d. priors for $\mu_{\ell,d}$ and $C_{d,d}$. Then,}
    \begin{align}
        \frac{\tilde{G}_{\min}}{{G}_{\min}}\xrightarrow[D\to \infty]{a.s.}\beta,\quad \frac{\tilde{\delta}_{\max}}{{\delta}_{\max}}\xrightarrow[D\to \infty]{a.s.}\sqrt{\beta},
    \end{align}
    \emph{where ${G}_{\min}=\min_{\ell\neq\ell'}\sum_{d=1}^D\frac{(\mu_{\ell,d}-\mu_{\ell',d})^2}{C_{d,d}}$ and  ${\delta}_{\max}= \max_{\ell}\Vert\tilde{\mu}_{\ell}\Vert_{\tilde{\mathbf{C}}}=\max_{\ell}\sqrt{\sum_{d=1}^D\frac{\mu^2_{\ell,d}}{C_{d,d}}}$ represent the minimum pairwise DG and maximum Mahalanobis norm on full feature dimensions, respectively.}
\end{lemma}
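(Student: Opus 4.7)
The plan is to reduce both ratios to sums of i.i.d.\ random variables across the feature dimensions, apply the strong law of large numbers (SLLN) to each such sum, and then push the a.s.\ convergence through the $\min$, $\max$, ratio, and square-root operations using the continuous mapping theorem. The finite cardinality of the class-pair collection $\{(\ell,\ell') : \ell\neq\ell'\}$ will make the passage through $\min$ and $\max$ essentially routine.

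For the first claim, I would write $G_{\ell,\ell'}=\sum_{d=1}^{D} g_{\ell,\ell'}(d)$ with $g_{\ell,\ell'}(d)\triangleq(\mu_{\ell,d}-\mu_{\ell',d})^{2}/C_{d,d}$, and $\tilde G_{\ell,\ell'}=\sum_{d\in\tilde{\cD}_{\sf rnd}} g_{\ell,\ell'}(d)$. The i.i.d.\ priors on $\{\mu_{\ell,d}\}_d$ and $\{C_{d,d}\}_d$ make $\{g_{\ell,\ell'}(d)\}_d$ i.i.d.\ for every fixed pair, so SLLN yields $G_{\ell,\ell'}/D \xrightarrow[D\to\infty]{a.s.} \nu_{\ell,\ell'} \triangleq \mathbb{E}[g_{\ell,\ell'}(1)]$. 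Since $\tilde{\cD}_{\sf rnd}$ is drawn uniformly at size $\tilde D=\beta D$ independently of the data, $\tilde G_{\ell,\ell'}$ is distributionally equivalent to a sum of $\tilde D$ i.i.d.\ draws from the same marginal, so $\tilde G_{\ell,\ell'}/\tilde D \xrightarrow[D\to\infty]{a.s.} \nu_{\ell,\ell'}$ as well. Intersecting the (at most $L(L-1)/2$) a.s.\ events and invoking the continuous mapping theorem on $\min$ delivers $G_{\min}/D \xrightarrow[D\to\infty]{a.s.} \nu_{\min}$ and $\tilde G_{\min}/\tilde D \xrightarrow[D\to\infty]{a.s.} \nu_{\min}$ with the common limit $\nu_{\min}\triangleq\min_{\ell\neq\ell'}\nu_{\ell,\ell'}$. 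Finally, decomposing $\tilde G_{\min}/G_{\min}=(\tilde G_{\min}/\tilde D)\cdot(\tilde D/D)\cdot(D/G_{\min})$, substituting the three limits, and cancelling $\nu_{\min}$ gives the claim.

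For the second claim, the same template works with $u_\ell(d)\triangleq\mu_{\ell,d}^{2}/C_{d,d}$, so that $\delta_{\max}^{2}=\max_\ell\sum_d u_\ell(d)$ and $\tilde\delta_{\max}^{2}=\max_\ell\sum_{d\in\tilde{\cD}_{\sf rnd}} u_\ell(d)$. SLLN on each per-class sum and continuous mapping on the $\max$ give $\delta_{\max}^{2}/D$ and $\tilde\delta_{\max}^{2}/\tilde D$ converging a.s.\ to the common limit $\max_\ell \mathbb{E}[u_\ell(1)]$, so the ratio decomposition produces $\tilde\delta_{\max}^{2}/\delta_{\max}^{2}\xrightarrow[D\to\infty]{a.s.}\beta$. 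One last continuous-mapping step with the square root (continuous at any positive value) yields $\tilde\delta_{\max}/\delta_{\max}\xrightarrow[D\to\infty]{a.s.}\sqrt{\beta}$.

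The main obstacle is conceptual rather than computational: one must justify that the pruned sums enjoy the same SLLN concentration as the full sums despite being indexed by a \emph{random} subset. The i.i.d.\ prior across dimensions makes the joint law exchangeable, so any fixed-size subset has the same marginal behavior as the first $\tilde D$ indices, which settles this subtlety. A minor but necessary caveat is that both ratio and square-root operations require the limits $\nu_{\min}$ and $\max_\ell\mathbb{E}[u_\ell(1)]$ to be strictly positive; both follow from generic non-degeneracy of the i.i.d.\ priors (distinct class centroids differ, and centroids are not almost surely zero), so the continuous mapping theorem applies without issue.
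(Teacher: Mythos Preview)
Your proposal is correct and follows exactly the approach the paper indicates: the paper omits the proof, stating only that it is ``based on the well-known \emph{strong law of large numbers} and the \emph{continuous mapping theorem}.'' Your write-up fleshes out precisely that route---SLLN on the per-dimension summands, continuous mapping through the finite $\min$/$\max$, the ratio, and the square root---and you even flag the two points the paper leaves implicit (exchangeability to handle the random subset and strict positivity of the limits to justify the division and square-root steps).
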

We therefore utilize the approximations $\tilde{G}_{\min}\approx\beta{G}_{\min}$ and $\tilde{\delta}_{\max}\approx\sqrt{\beta}{\delta}_{\max}$, leading to
\begin{equation}
\begin{split}
    \hat{A}&_{\mathsf{lb}}(\cS,\tilde{D}|\bphi)\approx 1-(L-1)\cdot\\&Q\left[\sqrt{\frac{\tilde{D}}{\eta D}}\biggl(\frac{\sqrt{{G}_{\min}}}{2}-2\biggl(1-\sum_{m\in\mathcal{S}} {w}_{m}\hat{\pi}_m\biggr){\delta}_{\max} \biggr)\right].
\end{split}
\end{equation}
This is a tight approximation which only relies on the pruning ratio but not the deterministic DG and feature norm over selected dimensions.

Next, consider the sensing accuracy for importance ordering given $\tilde{D}$.
It depends on the minimum pairwise DG on the subset feature dimensions with top-$\tilde{D}$ importance, defined as ${G}_{\min}(\tilde{D})\triangleq\min_{\ell\neq\ell'}\sum_{d\in\tilde{\mathcal{D}}_{\sf imp}}\frac{(\mu_{\ell,d}-\mu_{\ell',d})^2}{C_{d,d}}$, as well as the maximum feature Mahalanobis norm, defined as ${\delta}_{\max}(\tilde{D})\triangleq \max_{\ell}\sqrt{\sum_{d\in\tilde{\mathcal{D}}_\mathsf{imp}}\frac{\mu^2_{\ell,d}}{C_{d,d}}}$. Substituting $\tilde{G}_{\min}={G}_{\min}(\tilde{D})$ and $\tilde{\delta}_{\max}={\delta}_{\max}(\tilde{D})$ into \eqref{eqn: approx_exp_acc} yields the expression of $\hat{A}_{\sf lb}(\mathcal{S},\tilde{D}|\bphi)$.

By combining the results from both random and importance feature ordering cases, we can derive tractable approximations for the expected sensing accuracy, $A(\cS,\tilde{D}|\bphi)$. The main result is presented below, where we define $e_m\triangleq\exp(\phi_m/\tau)$ for notational simplicity and express $\eta$ and $w_m$ in terms of $e_m$.
\begin{MR}
    (Expected Sensing Accuracy Lower Bound). \emph{Given the subset of selected sensors $\mathcal{S}$ and number of selected features $\tilde{D}$, the lower bound of expected sensing accuracy $A(\cS,\tilde{D}|\bphi)$, can be approximated by}
    \begin{equation}\label{eqn: approx_acc}
     \boxed{
    \hat{A}_{\sf lb}(\mathcal{S},\tilde{D}|\bphi) \approx 
    \begin{cases}
        1-(L-1)&Q\left(\frac{\sum_{m\in\mathcal{S}}\sqrt{\frac{\tilde{D}}{D}}e_m \Psi_m}{\sqrt{{\sum_{m\in\mathcal{S}}e_m^2}}}\right), \\ &\quad\quad\,\,\,\text{\emph{random ordering}},\\
        1-(L-1)&Q\left(\frac{\sum_{m\in\mathcal{S}}e_m\Psi_m(\tilde{D})}{\sqrt{{\sum_{m\in\mathcal{S}}e_m^2}}}\right)\,, \\ &\,\,\,\,\,\,\text{\emph{importance ordering}},\\
    \end{cases}
    }
\end{equation}
\emph{where $\Psi_m \triangleq  \frac{\sqrt{{G}_{\min}}}{2}-2{\delta}_{\max}(1-\hat{\pi}_m)$ and $\Psi_m(\tilde{D})\triangleq \frac{\sqrt{{G}_{\min}(\tilde{D})}}{2}-2{\delta}_{\max}(\tilde{D})(1-\hat{\pi}_m)$}.
\end{MR}
We plot the empirical and theoretical lower bounds for both random and importance feature orderings in Fig.~\ref{fig: approx_rnd} and Fig.~\ref{fig: approx_imp}, respectively. The plots demonstrate that the derived lower bounds exhibit a similar trend as the empirical accuracy, with matching optimal numbers of selected sensors. This validates the effectiveness of using $\hat{A}_{\sf lb}(\mathcal{S},\tilde{D}|\bphi)$ as a tractable surrogate for sensor-selection optimization.
\begin{figure}[t]
\centering
\subfigure[Random ordering]{\includegraphics[height=3.4cm]{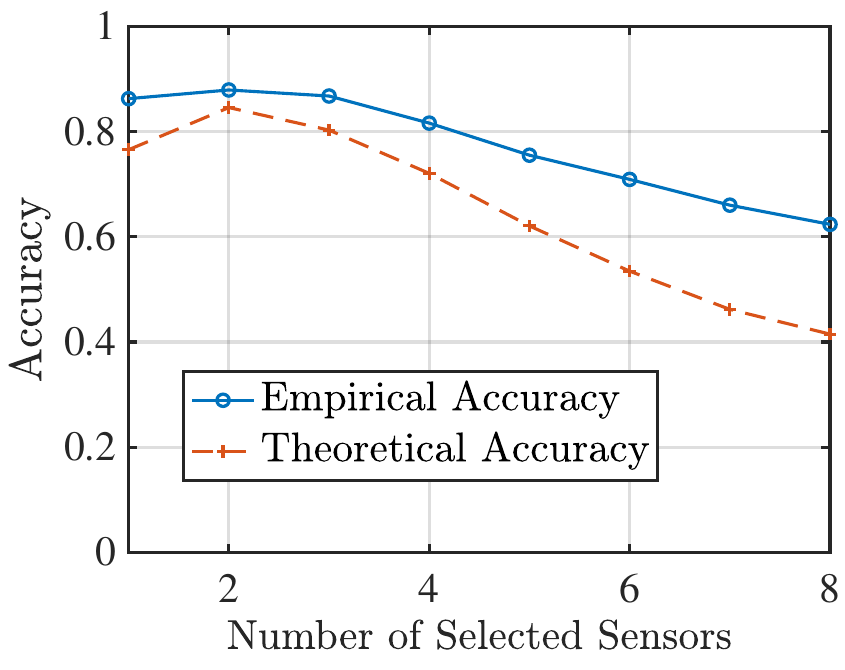}\label{fig: approx_rnd}}
\subfigure[Importance ordering]{\includegraphics[height=3.4cm]{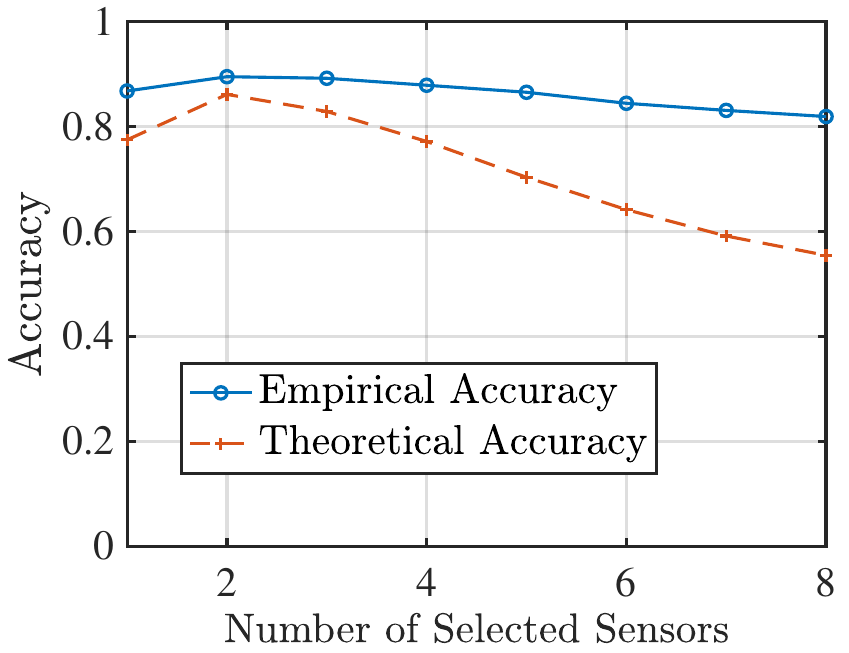}\label{fig: approx_imp}}
\caption{Empirical and theoretical accuracy w.r.t. the number of selected sensors. The sensors are ranked by priority indicators designed shortly in Sec.~\ref{sec: optimal_sel_algo}. The number of classes is set as $L=10$ and the number of features $D=20$.}
\label{fig: mn-max-rician}
\end{figure}
\begin{remark}
    (Expected Classification Margin). \emph{As reflected by Main Result 1, the contribution of each sensor to accuracy is quantified by $\Psi_m$ or $\Psi_m(\tilde{D})$, which can be interpreted as the lower bound of sensor $m$'s expected classification margin, with the expectation taken over random sensor relevance with posterior probability $\hat{\pi}_m$. For example, consider the physical meaning of $\Psi_m \triangleq  \frac{\sqrt{{G}_{\min}}}{2}-2{\delta}_{\max}(1-\hat{\pi}_m)$. Conditioned on $I_m=1$, indicating sensor $m$ is relevant, its feature mean corresponds to the ground-truth class centroid, resulting in a classification margin of $\frac{\sqrt{{G}_{\min}}}{2}$, i.e., the distance from the class centroid to the decision boundary. On the other hand, conditioned on $I_m=0$, the feature mean corresponds to an incorrect class centroid, which, in the worst case, can be on the opposite side of the decision boundary by a distance of $2\delta_{\max}-\frac{\sqrt{{G}_{\min}}}{2}$. Averaging over the two cases with probability $P(I_m=1|\bphi)=\hat{\pi}_m$ yields the expected distance from sensor $m$'s feature mean to the decision boundary, which corresponds exactly to the expression of ${\Psi_m}$. A positive ${\Psi_m}$ indicates that the feature mean is expected to lie on the ``correct'' side of the decision boundary, i.e., closer to the ground-truth class centroid than any other class centroids. }
\end{remark}

\section{Optimal Sensor-Selection Algorithms}\label{sec: optimal_sel_algo}

The sensor-selection problem is aimed at maximizing the expected inference accuracy by selecting participating sensors based on their semantic relevance scores and channel states.  Under the total communication time constraint \eqref{eq:comm_constraint}, the sensor-selection problem is formulated as
\begin{equation*}\text{(P1)}\quad 
        \begin{aligned}
        \max\limits_{\cS,\tilde{D}}\quad&A(\cS,\tilde{D}|\bphi) \\
        \mathrm{s. t.}\quad & 
        \mathcal{S}\subseteq \{1,\ldots,M\},\tilde{{D}}\in \{1,\ldots,D\}, \\
        &\sum_{m\in\mathcal{S}}\frac{q\tilde{D}}{r_m}\leq T.
        \end{aligned}
\end{equation*}
Here, $A(\cS,\tilde{D}|\bphi)$ takes different forms depending on whether random or importance feature ordering is used.
In this section, we develop optimized sensor-selection algorithms by solving Problem P1 with its objective substituted by the approximated lower bound in Main Result 1. The two cases of random and importance feature orderings are discussed respectively in two subsections. For each case, we establish a \emph{priority-based} sensor selection criterion, which ranks sensors in descending order of a priority indicator combining the communication rate and semantic relevance score, and selects top-ranked sensors. Based on this criterion, low-complexity algorithms are further designed to determine the optimal numbers of sensors and features.
\subsection{Case I: Random Ordering}
Using the accuracy surrogate \eqref{eqn: approx_acc} and the monotonicity of Q function, the sensor-selection problem for the random-ordering case is formulated as
\begin{equation*}\text{(P2)}\quad 
        \begin{aligned}
        \max\limits_{\cS,\tilde{D}}\quad&F_{\sf rnd}(\cS,\tilde{D})\triangleq\sqrt{\frac{\tilde{D}}{D\sum_{m\in\mathcal{S}}e_m^2}}\sum_{m\in\mathcal{S}}e_m \Psi_m \\
        \mathrm{s. t.}\quad & 
        \mathcal{S}\subseteq \{1,\ldots,M\},\quad \tilde{{D}}\in \{1,\ldots,D\}, \\
        &\sum_{m\in\mathcal{S}}\frac{q\tilde{D}}{r_m}\leq T.
        \end{aligned}
\end{equation*}
The problem belongs to 0-1 fractional programming, which is in general NP-hard. A standard approach is to perform a continuous relaxation to transform it into a concave-convex fractional programming problem, followed by the Dinkelbach's method\cite{FP}, which iteratively solves a series of convex sub-problems. However, this approach is computationally intensive due to the iterative nature and provides limited insight into the solution structure. Hence, in the subsequent discussion, we pursue a sub-optimal but insightful solution to Problem P2.

\subsubsection{Priority-based Solution Structure}
First, through the following lemma, we show that all sensors included in the optimal $\cS$ shall satisfy $\Psi_m\geq0$. The proof is straightforward and thus omitted for brevity.
\begin{lemma}\label{lemma: positivity_of_expected_DG}
    \emph{Let $(\cS^\dagger,\tilde{D}^\dagger)$ be a feasible solution to Problem P2. Assume that there exists $m\in \cS^\dagger$ such that $\Psi_m < 0$. Then, $(\cS^\dagger\setminus \{m\},\tilde{D}^\dagger)$ is still a feasible solution to Problem P2, but has a larger objective value, i.e., $F_{\sf rnd}(\cS^\dagger\setminus \{m\},\tilde{D}^\dagger)>F_{\sf rnd}(\cS^\dagger,\tilde{D}^\dagger)$.}
\end{lemma}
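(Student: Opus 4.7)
The plan decomposes the claim into two independent pieces: feasibility of $(\cS^\dagger\setminus\{m\},\tilde D^\dagger)$ and strict increase of the objective $F_{\sf rnd}$. Since $\tilde D^\dagger$ is common to both candidates, both pieces reduce to elementary one-variable arguments.

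Feasibility is immediate: dropping $m$ only deletes the non-negative term $q\tilde D^\dagger/r_m$ from the left-hand side of \eqref{eq:comm_constraint}, so the latency constraint remains satisfied, and the set- and integer-membership conditions are trivially preserved.

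For the strict-increase part, the factor $\sqrt{\tilde D^\dagger/D}$ is common to both values of $F_{\sf rnd}$ and cancels, reducing the comparison to the ratio $R(\cS)\triangleq\bigl(\sum_{k\in\cS}e_k\Psi_k\bigr)\big/\sqrt{\sum_{k\in\cS}e_k^{2}}$. Setting $A\triangleq\sum_{k\in\cS^\dagger\setminus\{m\}}e_k\Psi_k$ and $B\triangleq\sum_{k\in\cS^\dagger\setminus\{m\}}e_k^{2}>0$, the target inequality reads
\[
\frac{A}{\sqrt{B}}\;>\;\frac{A+e_m\Psi_m}{\sqrt{B+e_m^{2}}}.
\]
In the principal regime $A\ge 0$, removing $m$ strictly increases the numerator (since $e_m>0$ and $\Psi_m<0$ imply $e_m\Psi_m<0$) and strictly decreases the denominator, so I would simply chain the two monotonicity steps
\[
\frac{A}{\sqrt{B}}\;\ge\;\frac{A}{\sqrt{B+e_m^{2}}}\;>\;\frac{A+e_m\Psi_m}{\sqrt{B+e_m^{2}}},
\]
where the first inequality uses $A\ge 0$ together with $B\le B+e_m^{2}$, and the second is strict by $e_m\Psi_m<0$.

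The main obstacle I anticipate is the edge case $A<0$: then the numerator and denominator effects pull in opposite directions and direct cross-multiplication need not preserve the inequality. To dispose of it, I would invoke a preliminary reduction: $A<0$ combined with $\Psi_m<0$ forces $R(\cS^\dagger)<0$, whereas any singleton subset $\{k\}$ with $\Psi_k\ge 0$ attains $R(\{k\})\ge 0$ and hence strictly dominates $\cS^\dagger$ — such a $k$ is guaranteed under the query-effectiveness condition of Section~\ref{subsec: prob_relevance}, otherwise the sensing problem is degenerate. Attention can therefore be restricted to the regime $A\ge 0$, in which the two-step chain above closes the proof.
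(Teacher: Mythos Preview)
The paper omits the proof entirely (``straightforward and thus omitted''), so your argument for the principal regime $A\ge 0$ --- feasibility plus the two-step monotonicity chain --- is presumably what the authors intended, and it is correct.

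Your caution about $A<0$ is justified, but the workaround you offer does not close the gap: exhibiting a singleton $\{k\}$ with $R(\{k\})\ge 0>R(\cS^\dagger)$ shows that $\cS^\dagger$ is suboptimal, not that $\cS^\dagger\setminus\{m\}$ beats $\cS^\dagger$, which is what the lemma asserts. In fact the lemma as literally stated is \emph{false} in that regime: take $(e_k,\Psi_k)=(1,-1)$ and $(e_m,\Psi_m)=(10,-0.01)$, which is consistent with both $e$ and $\Psi$ being increasing in $\phi$; then $R(\{k,m\})=-1.1/\sqrt{101}\approx -0.11$ whereas $R(\{k\})=-1$, so removing $m$ strictly \emph{lowers} the objective. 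What is true --- and all the paper actually needs --- is that removing \emph{all} negative-$\Psi$ sensors at once never hurts: with $\cS^+=\{k\in\cS^\dagger:\Psi_k\ge 0\}$ nonempty one has $\sum_{k\in\cS^+}e_k\Psi_k\ge 0$, and your same two-step chain then gives $R(\cS^+)>R(\cS^\dagger)$ whenever $\cS^+\subsetneq\cS^\dagger$. That suffices to conclude that the optimal subset excludes every sensor with $\Psi_m<0$.
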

The lemma shows that removing sensor $m$ with a negative $\Psi_m$ from the selected sensor subset leads to a higher objective value, which implies that the optimal sensor subset should not contain any sensor $m$ with a negative expected classification margin, i.e., $\Psi_m<0$. This aligns with the intuition that including sensors with feature centroids on the incorrect side of the decision boundary will degrade the inference accuracy. Therefore, without loss of generality, we assume that these sensors are excluded from the selection process, such that $\Psi_m\geq 0$ for all $m=1,\ldots,M$.

Next, consider the slave problem of selecting the optimal sensor subset given a fixed number of features $\tilde{D}=\tilde{D}_0$, which can be formulated as
\begin{equation*}\text{(P2.1)}\quad 
        \begin{aligned}
        \max\limits_{\cS}\quad&F_{\sf rnd}(\cS,\tilde{D}_0)\triangleq\sqrt{\frac{\tilde{D}_0}{D\sum_{m\in\mathcal{S}}e_m^2}}\sum_{m\in\mathcal{S}}e_m \Psi_m \\
        \mathrm{s. t.}\quad & 
        \mathcal{S}\subseteq \{1,\ldots,M\},\quad \sum_{m\in\mathcal{S}}\frac{q\tilde{D}_0}{r_m}\leq T.
        \end{aligned}
\end{equation*}
Our approach involves a tight approximation of the objective function $F_{\sf rnd}(\cS,\tilde{D}_0)$, as characterized by the following lemma.
\begin{lemma}\label{lemma: obj_approx_rnd}
    \emph{The optimization objective, $F_{\sf rnd}(\cS,\tilde{D}_0)$, can be bounded as}
    \begin{equation}
        \frac{1}{M}\sqrt{\frac{\tilde{D}_0}{D}\sum_{m\in\cS}\Psi_m^2}\leq F_{\sf rnd}(\cS,\tilde{D}_0)\leq \sqrt{\frac{\tilde{D}_0}{D}\sum_{m\in\cS}\Psi_m^2}.
    \end{equation}
\end{lemma}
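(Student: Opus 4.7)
The plan is to first factor out the common $\sqrt{\tilde D_0/D}$ that appears on all three sides of the claimed inequality, reducing the lemma to bounding
\[
R(\cS) \;\triangleq\; \frac{\sum_{m\in\cS} e_m\Psi_m}{\sqrt{\sum_{m\in\cS}e_m^2}}
\]
between $\tfrac{1}{M}\sqrt{\sum_{m\in\cS}\Psi_m^2}$ and $\sqrt{\sum_{m\in\cS}\Psi_m^2}$. Note that $e_m=\exp(\phi_m/\tau)>0$ for every $m$, while $\Psi_m\ge 0$ for every $m\in\cS$ by Lemma~3 (any sensor with $\Psi_m<0$ can be dropped without loss of optimality), so both coordinate sequences are non-negative---a fact I will use repeatedly.

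The upper bound is a direct application of the Cauchy--Schwarz inequality to the vectors $(e_m)_{m\in\cS}$ and $(\Psi_m)_{m\in\cS}$:
\[
\Bigl(\sum_{m\in\cS} e_m\Psi_m\Bigr)^2 \;\le\; \Bigl(\sum_{m\in\cS} e_m^2\Bigr)\Bigl(\sum_{m\in\cS}\Psi_m^2\Bigr),
\]
which, upon dividing by $\sum_{m\in\cS}e_m^2$ and taking square roots, yields $R(\cS)\le\sqrt{\sum_{m\in\cS}\Psi_m^2}$.

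For the lower bound, I plan to invoke Chebyshev's sum inequality. The essential observation is that both $e_m$ and $\Psi_m$ are strictly increasing functions of the single underlying quantity $\phi_m$: $e_m=\exp(\phi_m/\tau)$ is manifestly increasing, and $\Psi_m=\tfrac{\sqrt{G_{\min}}}{2}-2\delta_{\max}(1-\hat\pi_m)$ is increasing in $\hat\pi_m$, which by~\eqref{eqn: approx_post_prob} is a strictly increasing sigmoid in $\phi_m$ under the query-effectiveness condition $\bar\alpha>0$. Sorting the sensors of $\cS$ by increasing $\phi_m$ therefore makes the two sequences similarly ordered, and Chebyshev's inequality gives
\[
\sum_{m\in\cS} e_m\Psi_m \;\ge\; \frac{1}{|\cS|}\Bigl(\sum_{m\in\cS} e_m\Bigr)\Bigl(\sum_{m\in\cS}\Psi_m\Bigr).
\]
Chaining the elementary $\ell_1$-versus-$\ell_2$ bounds $\sum_{m\in\cS} e_m \ge \sqrt{\sum_{m\in\cS} e_m^2}$ and $\sum_{m\in\cS}\Psi_m \ge \sqrt{\sum_{m\in\cS}\Psi_m^2}$ (both valid because every entry is non-negative) with $|\cS|\le M$ then delivers $R(\cS)\ge \tfrac{1}{M}\sqrt{\sum_{m\in\cS}\Psi_m^2}$.

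The main obstacle is the lower bound: \emph{a priori}, the vectors $(e_m)$ and $(\Psi_m)$ could be arbitrarily non-negative, and one can construct examples in which they are nearly ``orthogonal'' so that $R(\cS)$ is driven towards zero while $\sqrt{\sum_{m\in\cS}\Psi_m^2}$ stays bounded away from zero---no universal constant-factor lower bound holds in that generality. The bound in the lemma therefore relies crucially on the system-level co-monotone coupling of $e_m$ and $\Psi_m$ through the common latent $\phi_m$; verifying this monotonicity (in particular that $\hat\pi_m$ is strictly increasing in $\phi_m$ under $\bar\alpha>0$, as discussed after Lemma~1) is the only non-routine step, after which the remainder is a short chain of classical inequalities.
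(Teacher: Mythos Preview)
Your proof is correct. The upper bound is identical to the paper's (Cauchy--Schwarz). For the lower bound, however, you take a genuinely different route.

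The paper's argument is more elementary: it simply keeps the single term indexed by $m^\star=\arg\max_{m\in\cS}\phi_m$, using the co-monotonicity of $e_m$ and $\Psi_m$ in $\phi_m$ only to ensure that this same sensor attains both $e_{\max}$ and $\Psi_{\max}$. Dropping the remaining non-negative terms in the numerator and bounding the denominator by $|\cS|e_{\max}^2$ gives $R(\cS)\ge \Psi_{\max}/\sqrt{|\cS|}$; then $\Psi_{\max}\ge |\cS|^{-1/2}\sqrt{\sum_{m\in\cS}\Psi_m^2}$ and $|\cS|\le M$ finish the bound. Your Chebyshev route exploits the co-monotonicity more fully (using the entire ordering, not just the location of the maximum) and passes through the intermediate quantity $|\cS|^{-1}\sum_{m\in\cS}\Psi_m$ via the $\ell_1\!\ge\!\ell_2$ bounds. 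Both arguments arrive at the same final constant $1/M$; neither dominates the other at the intermediate stage, since $\Psi_{\max}/\sqrt{|\cS|}$ and $|\cS|^{-1}\sum_{m\in\cS}\Psi_m$ can each be larger depending on how spread out the $\Psi_m$'s are. Your explicit identification of the co-monotonicity hypothesis (and the failure mode without it) is a nice addition that the paper only alludes to in words.
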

\begin{proof} (See Appendix C.)
\end{proof}
While the upper bound is a result of the Cauchy-Schwarz inequality, the lower bound is specific to this problem, relying on the fact that the expected classification margin $\Psi_m$ is an increasing function of the exponential relevance score $e_m$. Lemma~\ref{lemma: obj_approx_rnd} suggests that the objective, $F_{\sf rnd}(\cS,\tilde{D}_0)$, can be approximated by a surrogate $\hat{F}_{\sf rnd}(\cS,\tilde{D}_0)\triangleq \sqrt{\frac{\tilde{D}_0}{D}\sum_{m\in\cS}\Psi_m^2}$. This approximation leads to the following modified slave problem:
\begin{equation*}\text{(P2.2)}\quad 
        \begin{aligned}
        \max\limits_{\cS}\quad &\sum_{m\in\cS}\Psi_m^2 \\
        \mathrm{s. t.}\quad & 
        \mathcal{S}\subseteq \{1,\ldots,M\},\quad \sum_{m\in\mathcal{S}}\frac{q\tilde{D}_0}{r_m}\leq T.
        \end{aligned}
\end{equation*}
This is a standard \emph{knapsack problem} aimed at maximizing the sum \emph{profit} of selected sensors subject to a sum \emph{cost} constraint, where the profit and cost of each sensor are defined as its squared expected classification margin,  $\Psi_m^2$, and required communication time, $q\tilde{D}_0/r_m$, respectively. While the knapsack problem is NP-complete, a well-known near-optimal solution is to incrementally selected sensors based on their \emph{profit densities}, i.e., the profit-to-cost ratios\cite{chen2011applied}. Applying this solution to Problem P2.2 leads to a \emph{priority-based} sensor selection scheme. In this scheme, a priority indicator $\gamma_m$, is defined for each sensor $m$ as follows:
\begin{equation}\label{eqn: priority_rnd}
    \gamma_m = \Psi_m^2 r_m.
\end{equation}
Then, the near-optimal solution is obtained by selecting sensors in descending order of $\gamma_m$ until selecting any additional sensor violates the communication constraint.

\subsubsection{Low-Complexity Algorithm Design}
With the near-optimal solution of the slave problem, Problem P2.2, a straightforward approach for Problem P2.1 is to iterate over $\tilde{D}=1,\ldots,D$, solve Problem P2.2 for every $\tilde{D}$, and selects the one that maximizes the objective function. However, based on \eqref{eqn: priority_rnd}, the priority indicator $\gamma_m$ is invariant to $\tilde{D}_0$, and hence the near-optimal solutions to \emph{all} slave problems follow the same sensor-priority order. The near-optimal sensor subset for the master problem P2.1 then must comprise sensors with top-$S^*$ priorities, as defined in \eqref{eqn: priority_rnd}, for some $S^*=1,\ldots,M$, irrespective of the optimal $\tilde{D}_0$. Since typically $D\gg M$, we propose to directly search for the optimal number of selected sensors, $S^*$, instead of $\tilde{D}_0$ without loss of optimality. The procedure is as follows. For each subset size $S$, select $\cS$ as the sensors with top-$S$ priorities, and choose $\tilde{D}$ as the maximum value to satisfy the communication constraint, i.e., $\tilde{D}=\min\{D,T/\sum_{m\in\cS}q r_M^{-1}\}$. The optimal subset size is thus the one that maximizes objective $F_{\sf rnd}(\cS,\tilde{D})$ across all $S=1,\ldots,M$. The procedure is summarized in Algorithm~\ref{algo: seas_rnd}.

\begin{algorithm}[t] \label{algo: seas_rnd}
\textbf{Input:} Relevance scores $\{\phi_m\}$ and channels $\{h_m\}$;\\
Calculate the rate $r_m$ based on $h_m$ for all $m$\;
Calculate the expected classification margin $\{\Psi_m\}$ based on $\{\phi_m\}_{m=1}^M$ for all sensors\;
Sort sensors in descending order of the priority indicator $\gamma_m=\Psi_m^2 r_m$\;
$F_{\max}\leftarrow 0$\;
\textbf{for} $s=1,\ldots,M$:\\
\ \ \ \ Select $\cS$ as the top-$s$ sorted sensors\;
\ \ \ \ $\tilde{D}\leftarrow \min\{D,T/\sum_{m\in\cS}q r_m^{-1}\}$, $F_s\leftarrow F_{\sf rnd}(\cS,\tilde{D})$\;
\ \ \ \ \textbf{if} $F_s > F_{\max}$ \textbf{then} $F_{\max}\leftarrow F_s$, $\cS^*\leftarrow \cS$, $\tilde{D}^* \leftarrow \tilde{D}$\;
\textbf{Output:} The optimal sensor subset $\cS^*$ and optimal number of features $\tilde{D}^*$.
\parbox{\linewidth}{\caption{Sensor Selection under Random Ordering}}
\end{algorithm}

\subsection{Case II: Importance Ordering}
When the features are selected in the order of importance, the sensor-selection problem is formulated as
\begin{equation*}\text{(P3)}\quad 
        \begin{aligned}
        \max\limits_{\cS,\tilde{D}}\quad&F_{\sf imp}(\cS,\tilde{D})\triangleq\sqrt{\frac{1}{\sum_{m\in\mathcal{S}}e_m^2}}\sum_{m\in\mathcal{S}}e_m\Psi_m(\tilde{D}) \\
        \mathrm{s. t.}\quad & 
        \mathcal{S}\subseteq \{1,\ldots,M\},\quad \tilde{{D}}\in \{1,\ldots,D\}, \\
        &\sum_{m\in\mathcal{S}}\frac{q\tilde{D}}{r_m}\leq T.
        \end{aligned}
\end{equation*}
Similarly, consider the following slave problem given the number of features $\tilde{D}=\tilde{D}_0$:
\begin{equation*}\text{(P3.1)}\quad 
        \begin{aligned}
        \max\limits_{\cS}\quad&F_{\sf imp}(\cS,\tilde{D}_0)\triangleq\sqrt{\frac{1}{\sum_{m\in\mathcal{S}}e_m^2}}\sum_{m\in\mathcal{S}}e_m\Psi_m(\tilde{D}_0) \\
        \mathrm{s. t.}\quad & 
        \mathcal{S}\subseteq \{1,\ldots,M\},\quad \sum_{m\in\mathcal{S}}\frac{q\tilde{D}_0}{r_m}\leq T.
        \end{aligned}
\end{equation*}
It is easy to verify that Lemma~\ref{lemma: positivity_of_expected_DG} also holds for Problem P3.1, which means that any sensor $m$ with $\Psi_m(\tilde{D}_0)<0$ shall not be included in the solution for Problem P3.1. Assuming $\Psi_m(\tilde{D}_0)\geq 0$ for all $m$, we note that Problem P3.1 shares the same structure as Problem P2.1, with the only difference being the replacement of $\sqrt{\frac{\tilde{D}_0}{D}}\Psi_m$ in the former by $\Psi_m(\tilde{D}_0)$ in the latter. This reflects the fundamental difference between random and importance feature ordering: in the random-ordering case, the DG from a subset of features is asymptotically proportional to the number of features $\tilde{D}_0$; in contrast, in the importance-ordering case, the DG is an arbitrary function of $\tilde{D}_0$ depending on the feature importance values. The difference results in a modified priority-based selection scheme for Problem P3.1, where the priority indicator $\gamma_m(\tilde{D}_0)$ for each sensor $m$ is defined as
\begin{equation}\label{eqn: priority_imp}
    \gamma_m(\tilde{D}_0) = \Psi_m^2(\tilde{D}_0) r_m.
\end{equation}
The near-optimal solution to Problem P3.1 is obtained by selecting sensors in descending order of $\gamma_m(\tilde{D}_0)$ until the communication constraint is violated. However, different from the random-ordering case, the priority indicator $\gamma_m(\tilde{D}_0)$ is now dependent on the number of features $\tilde{D}_0$. This is because the deterministic selection of feature dimensions has distinct effects on the expected classification margins of different sensors. Therefore, the near-optimal algorithm for importance-order case nests the priority-based sensor selection into a search for the optimal number of features $\tilde{D}_0$, which is summarized in Algorithm~\ref{algo: seas_imp}.
\begin{algorithm}[t]
\label{algo: seas_imp}
\textbf{Input:} Relevance scores $\{\phi_m\}$ and channels $\{h_m\}$;\\
Calculate the rate $r_m$ based on $h_m$ for all $m$\;
$F_{\max}\leftarrow 0$\;
\textbf{for} $\tilde{D}=1,\ldots,D$:\\
\ \ \ \ Calculate the expected classification margin on the selected feature dimensions $\{\Psi_m(\tilde{D})\}_{m=1}^M$\;
\ \ \ \ Sort sensors in descending order of the priority indicator $\gamma_m(\tilde{D}) = \Psi_m^2(\tilde{D}) r_m$\;
\ \ \ \ $s\leftarrow \max\{s|\sum_{m=1}^s q\tilde{D}r_m^{-1}\leq T\}$\;
\ \ \ \ Select $\cS$ as the top-$s$ sorted sensors\;
\ \ \ \ $\tilde{D}\leftarrow \min\{D,T/\sum_{m\in\cS}q r_M^{-1}\}$,$F_s\leftarrow F_{\sf imp}(\cS,\tilde{D})$\;
\ \ \ \ \textbf{if} $F_s > F_{\max}$ \textbf{then} $F_{\max}\leftarrow F_s$, $\cS^*\leftarrow \cS$, $\tilde{D}^* \leftarrow \tilde{D}$\;
\textbf{Output:} The optimal sensor subset $\cS^*$ and optimal number of features $\tilde{D}^*$.
\parbox{\linewidth}{\caption{Sensor Selection under Importance Ordering}}
\end{algorithm}

\section{Experimental Results}\label{sec: exp_results}

\subsection{Experimental Settings}
We evaluate the sensing-task performance in an ISEA system illustrated in Fig.~\ref{fig: system}. The channels between the server and sensors are assumed to follow i.i.d. Rayleigh fading with a path loss of $-20$ dB. The bandwidth for feature transmission is set to $1$ MHz. The inference performance of all sensor-selection schemes is evaluated on two datasets, which are detailed below.
\begin{itemize}
    \item \textbf{Synthetic dataset:} The synthetic dataset is generated following the GM data model, and the linear classifier \eqref{eq:linear_classifier} is adopted. Unless otherwise specified, the total number of classes is set to $L=40$ and the feature dimension is $D=100$. The GM data statistics are specified as follows. The centroid of each class, $\bmu_\ell$, is i.i.d. randomly sampled from an Euclidean norm ball. The covariance matrix of data noise $\bC$ is a diagonal matrix with each diagonal entry sampled from a uniform distribution between $0$ and $1$. The features of the query image are corrupted by a heavy Gaussian noise with its variance $3$ times larger than that of the sensor observation noise. Each realization includes views of $M=12$ sensors, with the prior probability of each sensor observing the target object set to $\pi_\mathsf{r}=0.4$. The query and keys consist of the first $D_\mathsf{q}=30$ dimensions of the query features and sensor observations, respectively. 
    \item \textbf{ModelNet dataset:} Experiments with non-linear NN classifiers are conducted on the well-known ModelNet dataset \cite{ModelnetPaper}, which provides $12$ distinct views for each object. Each object belongs to one of the $L=40$ classes. To simulate the sensing scenario in Fig.~\ref{fig: system}, the following data-generation procedure is applied. In each realization, a query image of the target object is obtained by the server and corrupted by pixel-wise Gaussian noise. The views of $M=9$ sensors is divided into three groups, each consisting of three views. The first group comprises three different views of the target object of interest to the server, representing views of semantic-relevant sensors. Each of the other groups contains three views of an object with its class different from the ground truth. All sensor views are shuffled such that the server has no positional information on semantic relevance. Following the MVCNN literature\cite{Hang2015ICCV}, a VGG16 model is split into a CNN feature extractor and a fully-connected non-linear classifier. The feature dimension, corresponding to the output layer of the feature extractor, is $512\times 7\times 7=25,088$. Both the query and key encoders comprise sequential fully-connected layers with ReLU activation functions to encode the extracted image features into vectors of length $D_\mathsf{q}=256$, and dot-product attention is applied to calculate the relevance score.
\end{itemize}

The performance of the proposed sensor-selection scheme is compared with four benchmarking schemes, as detailed below.
\begin{itemize}
    \item \textbf{When2com\cite{LYC2020CVPR}:} Using the semantic relevance score $\phi_m$, the server first calculates the normalized weights $w_m$ (see Section~\ref{subsec: feature_prune_fuse}). Then, it selects sensors with normalized weights exceeding $1/M$ to upload features, i.e., $\cS=\{m|w_m>1/M\}$. Attentive fusion is then applied to uploaded features of selected sensors.
    \item \textbf{Best-channel sensor selection:} The sensors are ranked in descending order based on channel gains. Then, a fixed number of sensors with the largest channel gains are selected for feature uploading and attentive fusion. The number of selected sensors is set to $4$ and $3$ for synthetic and ModelNet datasets, respectively.
    \item \textbf{All-inclusive attentive fusion:}
    All sensors are selected for feature uploading and attentive fusion.
    \item \textbf{All-inclusive averaging:} All sensors are selected for feature uploading, and their feature maps are averaged to obtain the fused feature map.
\end{itemize}

\subsection{Performance Evaluation on the Synthetic Dataset}
We first evaluate the E2E inference accuracy performance achieved by different sensor-selection schemes on the synthetic dataset with the linear classifier. The curves of accuracy versus receive SNR levels are plotted in Fig.~\ref{fig_syn} for both random and importance feature ordering. The results show that the proposed semantic-relevance based sensor selection achieves the highest overall accuracy at all SNR levels. Compared with channel-agnostic schemes, i.e., When2com and all-inclusive attentive, the proposed scheme can avoid sensors in deep fades, since the derived priority indicator considers both channels and semantic relevance. Moreover, the numbers of features and semantic-relevant views are optimally balanced to maximize the E2E accuracy. These two advantages result in significant performance gains in low- to moderate-SNR regimes. At high SNR levels where communication resources are sufficient, When2com and all-inclusive attentive also achieve optimal inference accuracy. On the other hand, best-channel selection performs better than channel-agnostic schemes at low SNRs due to the transmission of a considerably larger number of features. However, it fails to converge to optimal accuracy even at high SNRs, as it lacks awareness of semantic relevance and therefore has a fixed probability of including irrelevant views, degrading the accuracy. Comparing Figs.~\ref{fig_syn_rand} and \ref{fig_syn_imp}, one can observe that selecting features in the order of importance yields accuracy gains over random ordering at lower SNRs. However, this gain diminishes as the SNR further increases, as nearly all features can be transmitted in such cases, rendering the impact of feature ordering negligible.

\begin{figure}[t]
\centering
\subfigure[]{\includegraphics[height=3.65cm]{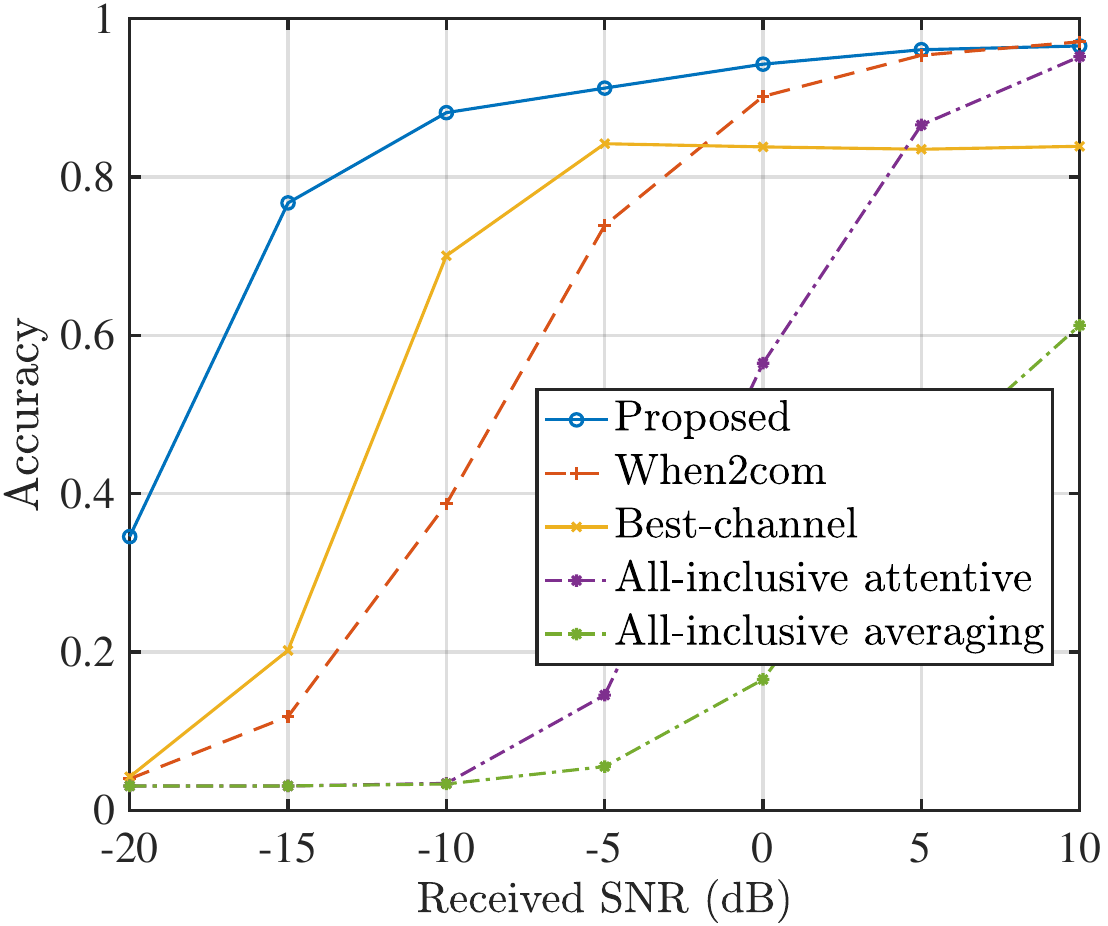}\label{fig_syn_rand}}
\subfigure[]{\includegraphics[height=3.65cm]{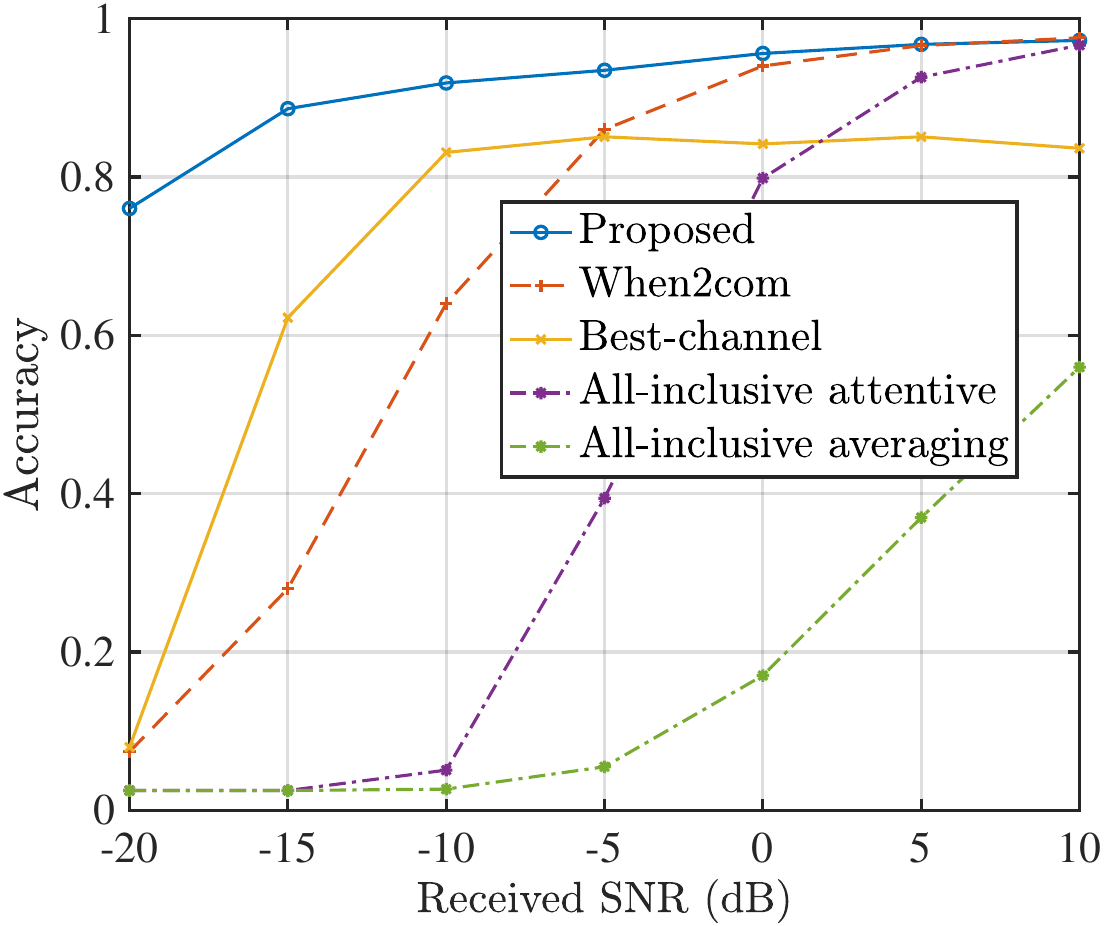}\label{fig_syn_imp}}
\caption{The accuracy performance of sensor selection schemes versus received SNR on the synthetic dataset with (a) random feature ordering and (b) importance feature ordering. }
\label{fig_syn}
\end{figure}

\begin{figure}[t]
\centering
\subfigure[]{\includegraphics[height=3.65cm]{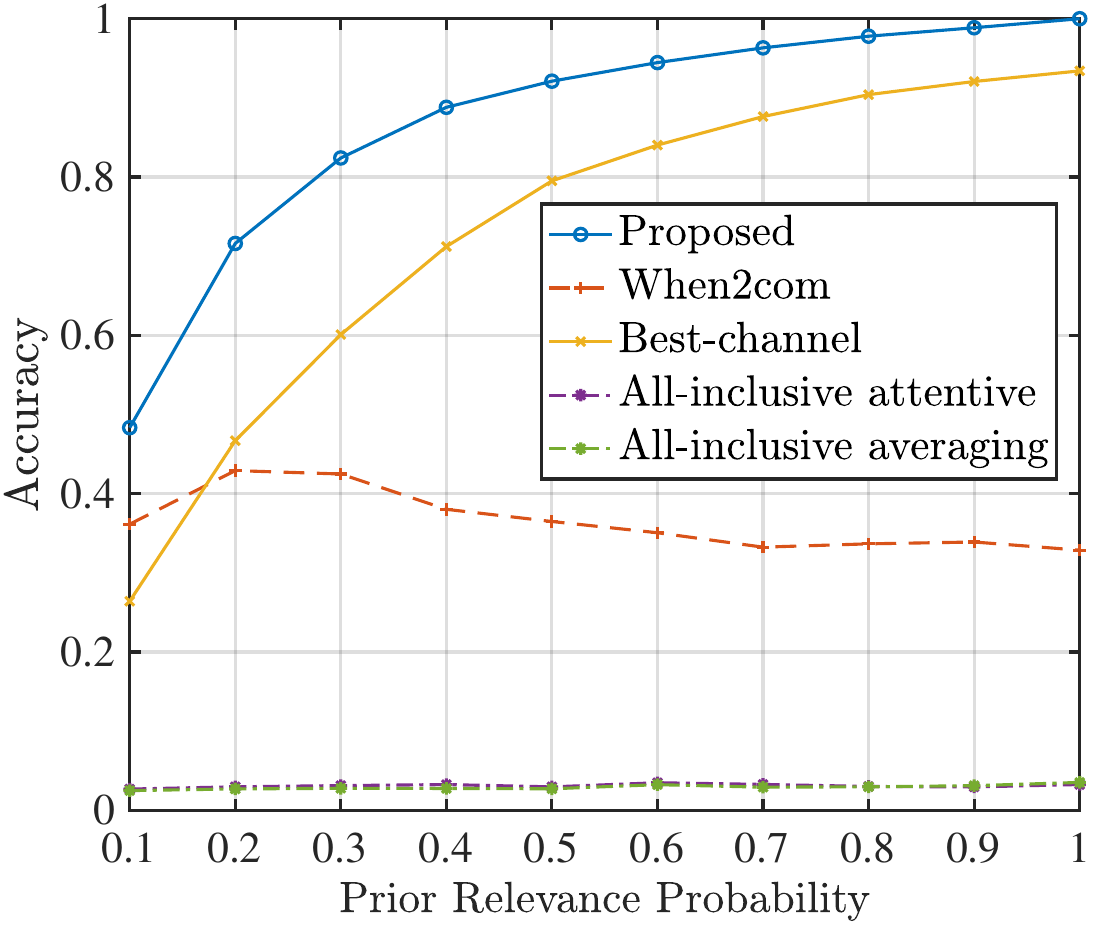}\label{fig_syn_rand_v_pr}}
\subfigure[]{\includegraphics[height=3.65cm]{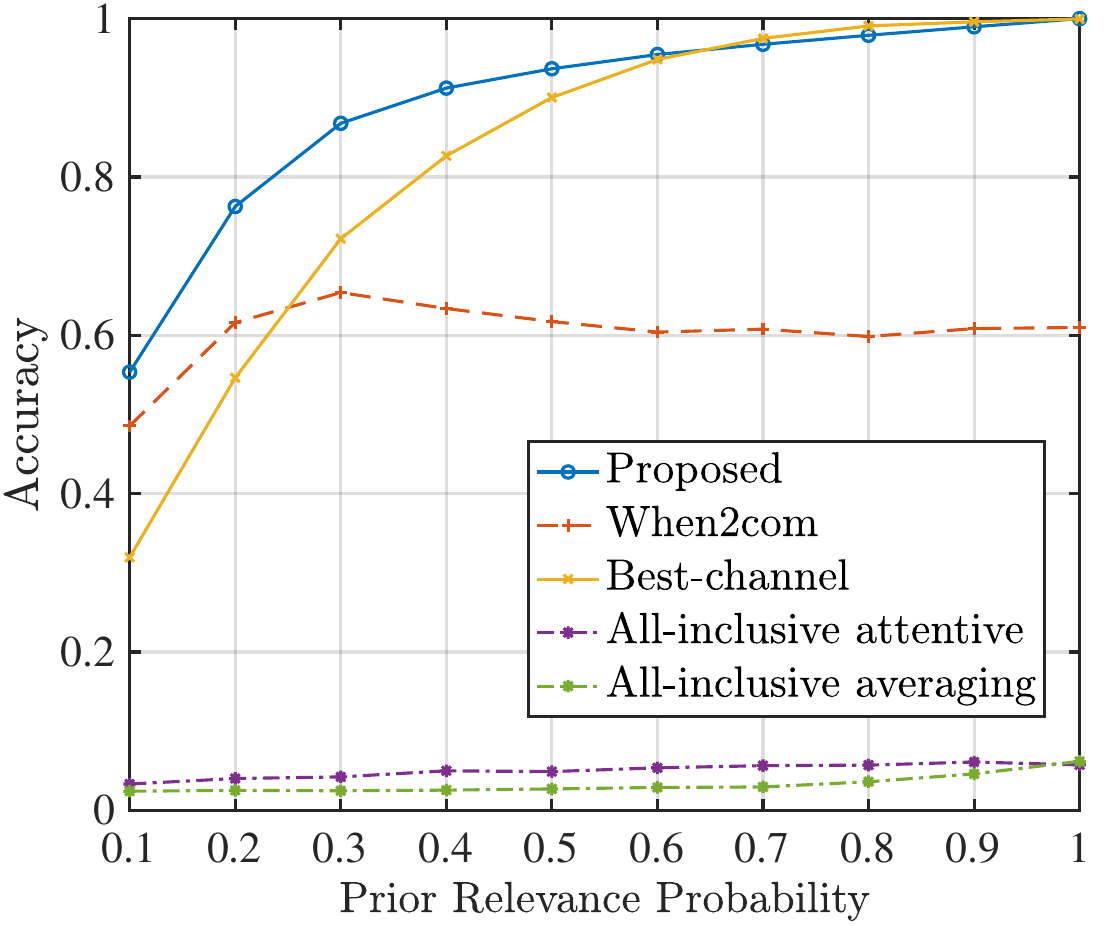}\label{fig_syn_imp_v_pr}}
\caption{The accuracy performance of sensor selection schemes versus received SNR on the synthetic dataset with (a) random feature ordering and (b) importance feature ordering. The received SNR is set as $-10$ dB. }
\label{fig_syn_v_pr}
\end{figure}

Next, we evaluate the accuracy performance as the prior probability of sensor relevance, $\pi_\mathsf{r}$, changes. The results are plotted in Figs.~\ref{fig_syn_rand_v_pr} and \ref{fig_syn_imp_v_pr} for random and importance feature ordering, respectively. The proposed scheme exhibits robustness across different levels of relevance probability as it factors in the prior relevance probability when computing the expected classification margins of sensors. When the relevance probability approaches $1$, best-channel selection demonstrates close performance to the proposed scheme, as in scenarios where all sensors are relevant, channels become the primary factor in sensor selection. Interestingly, despite an initial increase, a gradual decline in accuracy is observed for When2com as the prior relevance probability increases. This trend can be attributed to the following reason: as the relevance probability increases, When2com tends to continuously increase the number of selected sensors. However, due to limited communication resource and the latency constraint, this leads to a significant decrease in the number of uploaded features. The detrimental impact on accuracy outweighs the benefit of including more sensors, leading to an overall performance deterioration.

\subsection{Performance Evaluation on the ModelNet Dataset}

Next, we apply the proposed sensor selection scheme to the ModelNet dataset and compare its performance with benchmarks.
\subsubsection{Sensor selection on real dataset}
We modify Algorithms~\ref{algo: seas_rnd} and \ref{algo: seas_imp} for sensor selection on the ModelNet dataset. The key issue is to compute the expected classification margin for each sensor on the ModelNet dataset, after which the proposed algorithms can be directly applied without further modifications. Consider random ordering first. Recall that the expected classification margin for sensor $m$, $\Psi_m$, is a linear function of the posterior relevance probability $\hat{\pi}_m$: $\Psi_m \triangleq  \frac{\sqrt{{G}_{\min}}}{2}-2{\delta}_{\max}(1-\hat{\pi}_m)$, where $\hat{\pi}_m$ is computed from the relevance score $\phi_m$. However, coefficients of this linear function, which require the minimum pairwise DG, are unknown for MVCNNs. To address this challenge, we propose a simple yet efficient solution through hyperparameter tuning. From Problem P2's objective, we note that scaling $\Psi_m$ by a constant factor all $m$ does not impact the optimal solution of sensor selection. Therefore, without loss of optimality, we normalize $\Psi_m$ as $\Psi_m=1-\lambda(1-\hat{\pi}_m)$, where $\lambda=\frac{4{\delta}_{\max}}{\sqrt{{G}_{\min}}}$. We then conduct a linear search over $\lambda$ as a hyperparameter for maximum accuracy on the training dataset and apply the optimal $\lambda$ for testing. In our experiments, we set $\lambda=1.75$. Next, consider importance ordering, which requires the expected classification margin on a set of top-importance features, i.e., $\Psi_m(\tilde{D})$. We propose an estimation, $\Psi_m(\tilde{D})\approx\theta(\tilde{D})\Psi_m$, where $\theta(\tilde{D})$ is a discount factor as the ratio between sum importance over top-$\tilde{D}$ dimensions and all dimensions, i.e., $\theta(\tilde{D})=\sum_{d\in\cD_{\sf imp}}\bar{g}(d)/\sum_{d=1}^D\bar{g}(d)$. The importance of the $d$-th dimension, $\bar{g}(d)$, is defined as the sum of squared gradients over all weights corresponding $d$-th dimension obtained at the final training round (see\cite{Lan2023TWC} for details).

\begin{figure}[t]
\centering
\subfigure[]{\includegraphics[height=3.65cm]{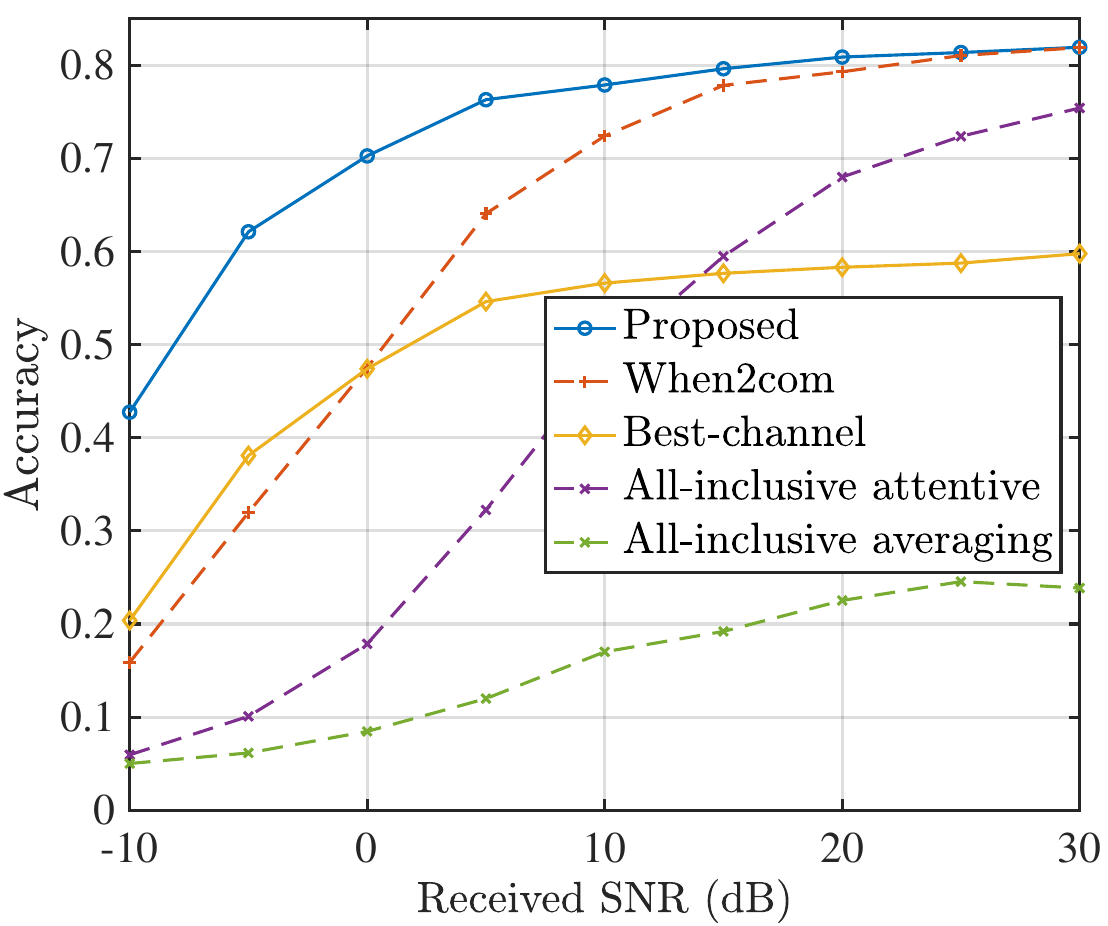}\label{fig_modelnet_rand}}
\subfigure[]{\includegraphics[height=3.65cm]{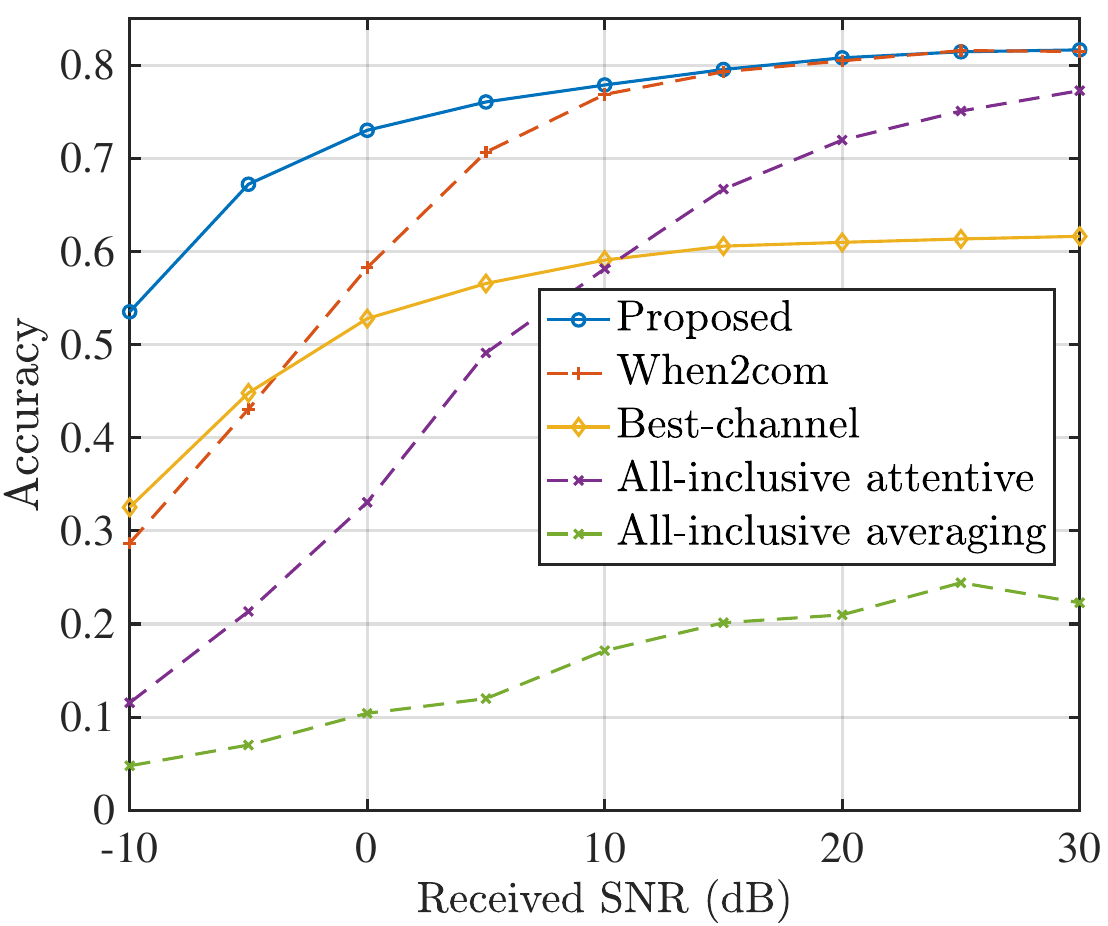}\label{fig_modelnet_imp}}
\caption{The accuracy performance of sensor selection schemes versus received SNR on the ModelNet dataset with (a) random feature ordering and (b) importance feature ordering. }
\label{fig_modelnet}
\end{figure}

\subsubsection{Performance comparisons with benchmarks}
In Fig.~\ref{fig_modelnet}, we plot the curves of accuracy versus receive SNR for random and importance ordering. Similar to the synthetic case, the proposed sensor-selection schemes outperforms all benchmarking schemes on the ModelNet dataset thanks to relevance-aware sensor selection and optimized tradeoffs. Best-channel selection converges at an accuracy of only 60\% as it can miss relevant views even when the communication resource is sufficient. Also, importance feature ordering leads to an accuracy gain over random ordering, particularly at lower SNR levels.

\section{Conclusion} \label{sec: conclusion}
In this paper, we have proposed the framework of semantic-relevance-aware sensor selection to efficiently support distributed ISEA applications. Our E2E accuracy analysis reveals that the semantic relevance scores and channel states of selected sensors jointly determine the expected sensing accuracy. Aimed at accuracy maximization, we have identified a near-optimal sensor-selection policy with a priority-based structure, where sensors are ranked based on a priority indicator that takes into account both relevance scores and channel states. Experimental results show that the proposed sensor-selection algorithms outperform benchmarking schemes significantly.

This work is the first to study the interplay between semantic relevance and channels in ISEA systems, opening up several directions for future research.  One potential direction involves extending the system to a multi-antenna scenario, where random beamforming techniques \cite{dumb_antenna} can be integrated to enhance the sensor-selection gain. Moreover, the development of random access schemes could be pursued to mitigate the signaling overhead associated with centralized sensor selection processes. Another direction is to exploit the temporal correlation of semantic relevance for predictive sensor selection.

\appendix
\subsection{Proof of Theorem 1}
The sensing accuracy conditioned of the relevance of sensors and the observed class of each sensor is given by
\begin{equation}
     A(\cS,\tilde{\cD}|\{I_m\},\{\ell_m\}) = \frac{1}{L}\sum_{\ell_0=1}^L \mathrm{Pr}(\hat{\ell}=\ell_0|\ell_0,\{I_m\},\{\ell_m\}),
\end{equation}
Using the linear classifier defined in \eqref{eq:linear_classifier}, the probability $\mathrm{Pr}(\hat{\ell}=\ell_0|\ell_0,\{I_m\},\{\ell_m\})$ can evaluated and lower bounded using \emph{union bound} as
\begin{align}
    \mathrm{Pr}(\hat{\ell}=\ell_0|\ell_0,&\{I_m\},\{\ell_m\}) = 1-\mathrm{Pr}(\bigcup_{\ell'\neq \ell_0}\{\delta_{\ell_0,\ell'}\geq 0\})\\
    &\geq 1-(L-1) \max_{\ell'\neq\ell_0} \mathrm{Pr}(\delta_{\ell_0,\ell'}\geq 0).\label{eqn: prob_union_bound}
\end{align}
where $\delta_{\ell_0,\ell'} \triangleq z_{\ell_0}(\tilde{\bff}_\mathsf{g}) - z_{\ell'}(\tilde{\bff}_\mathsf{g})$ denotes the Mahalanobis distance from the aggregated feature $\tilde{\bff}_\mathsf{g}$ to $\tilde{\bmu}_{\ell_0}$ subtracted by that to $\tilde{\bmu}_{\ell'}$. Using the definition of Mahalanobis distance, $\delta_{\ell_0,\ell'}$ is further evaluated as
\begin{equation}
    \delta_{\ell_0,\ell'} 
     = 2(\tilde{\bmu}_{\ell'}-\tilde{\bmu}_{\ell_0})^T \tilde{\mathbf{C}}^{-1} \tilde{\bff}_\mathsf{g} + \tilde{\bmu}_{\ell_0}^T \tilde{\mathbf{C}}^{-1}\tilde{\bmu}_{\ell_0} - \tilde{\bmu}_{\ell'}^T \tilde{\mathbf{C}}^{-1}\tilde{\bmu}_{\ell'}.
\end{equation}
According to \eqref{eq:cond_agg_feature_dist}, the conditional distribution of $\tilde{\bff}_\mathsf{g}$ is a Gaussian $\mathcal{N}(\rho \tilde{\bmu}_{\ell_0}+\Delta,\eta \tilde{\bC})$, where $\Delta$ is the data component from irrelevant sensors defined as $\Delta \triangleq\sum_{m\in\mathcal{S}, I_m=0}w_m\tilde{\bmu}_{\ell_{m}}$. Then the conditional distribution of $\delta_{\ell_0,\ell'}$ is also Gaussian, $\delta_{\ell_0,\ell'}\sim\mathcal{N}(\bmu_\delta,4\eta G_{\ell_0,\ell'})$, where
\begin{equation}
    \bmu_{\delta
    }\triangleq2(\tilde{\bmu}_{\ell'}-\tilde{\bmu}_{\ell_0})^T \tilde{\mathbf{C}}^{-1}(\rho \tilde{\bmu}_{\ell_0}+\Delta)+\tilde{\bmu}_{\ell_0}^T \tilde{\mathbf{C}}^{-1}\tilde{\bmu}_{\ell_0} - \tilde{\bmu}_{\ell'}^T \tilde{\mathbf{C}}^{-1}\tilde{\bmu}_{\ell'}
\end{equation}
\begin{equation}
    \tilde{G}_{\ell_0,\ell'}\triangleq(\tilde{\bmu}_{\ell'}-\tilde{\bmu}_{\ell_0})^T\tilde{\mathbf{C}}^{-1}(\tilde{\bmu}_{\ell'}-\tilde{\bmu}_{\ell_0}).
\end{equation}
The probability of $\delta_{\ell_0,\ell'}\geq 0$ can then be expressed by the Q function as
\begin{equation}\label{eqn: pairwise_prob_exact}
    \mathrm{Pr}(\delta_{\ell_0,\ell'}\geq 0)=Q\left(-\frac{\bmu_\delta}{2\sqrt{\eta G_{\ell_0,\ell'}}}\right).
\end{equation}
Since $Q(\cdot)$ is a decreasing function, to find an upper bound for $\mathrm{Pr}(\delta_{\ell_0,\ell'})$, we shall obtain an upper bound of $\bmu_\delta$. Via algebraic manipulation, we have
\begin{align}
    \bmu_\delta &= -(\tilde{\bmu}_{\ell'}-\tilde{\bmu}_{\ell_0})^T \tilde{\mathbf{C}}^{-1}\left[\tilde{\bmu}_{\ell'}-(2\rho-1)\tilde{\bmu}_{\ell_0}-2\Delta\right]\\
    &=-\tilde{G}_{\ell_0,\ell'}+2(\tilde{\bmu}_{\ell'}-\tilde{\bmu}_{\ell_0})^T\tilde{\mathbf{C}}^{-1}\left[(\rho-1)\tilde{\bmu}_{\ell_0}+\Delta\right]
\end{align}
Applying \emph{Cauchy-Schwarz inequality} on the second term, we have
\begin{align}
    \bmu_\delta & \leq -\tilde{G}_{\ell_0,\ell'} + 2{\Vert\tilde{\bmu}_{\ell'}-\tilde{\bmu}_{\ell_0}\Vert_{\tilde{\bC}}}\Vert(\rho-1)\tilde{\bmu}_{\ell_0}+\Delta\Vert_{\tilde{\bC}} \\
    & \leq -\tilde{G}_{\ell_0,\ell'} + 2\sqrt{\tilde{G}_{\ell_0,\ell'}}\left[(1-\rho)\Vert\tilde{\bmu}_{\ell_0}\Vert_{\tilde{\bC}}+\Vert\Delta\Vert_{\tilde{\bC}}\right],
\end{align}
where we have used $\sqrt{\tilde{G}_{\ell_0,\ell'}}=\Vert\tilde{\bC}^{-\frac{1}{2}}(\tilde{\bmu}_{\ell'}-\tilde{\bmu}_{\ell_0})\Vert$ and the second inequality is due to the \emph{triangular inequality}. Defining $\tilde{\delta}_{\max}\triangleq \max_{\ell}\Vert\tilde{\mu}_{\ell}\Vert_{\tilde{\mathbf{C}}}$, we further upper bound $\Vert\Delta\Vert_{\tilde{\bC}}$ as 
\begin{align}
    \Vert\Delta\Vert_{\tilde{\bC}} &= \biggl\Vert\sum_{m\in\mathcal{S}}(1-I_m)w_m\tilde{\bmu}_{\ell_{m}}\biggr\Vert_{\tilde{\bC}}\\&\leq\sum_{m\in\mathcal{S}}(1-I_m)w_m\left\Vert\tilde{\bmu}_{\ell_{m}}\right\Vert_{\tilde{\bC}}\\
    &\leq \sum_{m\in\mathcal{S}}(1-I_m)w_m \tilde{\delta}_{\max} = (1-\rho)\tilde{\delta}_{\max}.
\end{align}
Also, noting that $\Vert\tilde{\bmu}_{\ell_0}\Vert_{\tilde{\bC}}\leq\tilde{\delta}_{\max}$, we have
\begin{equation}
    \bmu_\delta \leq -\tilde{G}_{\ell_0,\ell'} + 4(1-\rho)\sqrt{\tilde{G}_{\ell_0,\ell'}}\tilde{\delta}_{\max}.
\end{equation}
Applying the upper bound on $\bmu_\delta$ to \eqref{eqn: pairwise_prob_exact} yields
\begin{equation}
    \mathrm{Pr}(\delta_{\ell_0,\ell'}\geq 0)\leq Q\left[\frac{1}{\sqrt{\eta}}\left(\frac{\sqrt{\tilde{G}_{\ell_0,\ell'}}}{2}-2(1-\rho)\tilde{\delta}_{\max} \right)\right]
\end{equation}
The proof is completed by substituting this upper bound into \eqref{eqn: prob_union_bound} and using $\tilde{G}_{\ell_0,\ell'}\geq \tilde{G}_{\min}$.

\subsection{Proof of Lemma 1}
Using Bayes' theorem, we have
\begin{equation}\label{eqn: posterior_bayes}
\pi_m = \frac{\pi_\mathsf{r}{p}(\phi_m|I_m=1)}{\pi_\mathsf{r}{p}(\phi_m|I_m=1)+(1-\pi_\mathsf{r}){p}(\phi_m|I_m=0)}.
\end{equation}
Conditioned on $I_m=1$, $\bff_m$ is Gaussian distributed as $\bk_m\sim \mathcal{N}(\bmu_0,\bC)$, and hence the conditional distribution of $\phi_m=\bq^T \bW_\mathsf{k}\bff_m$ is ${p}(\phi_m|I_m=1)=\mathcal{N}(\bq^T\bW_\mathsf{k}\bmu_0,\sigma_\mathsf{s}^2)$. Conditioned on $I_m=0$, $\bff_m$ is the mixture of $L-1$ Gaussians, $\mathcal{N}(\bmu_{\ell},\bC)$ for all $\ell\neq\ell_0$. Therefore, we have 
\begin{equation}
    {p}(\phi_m|I_m=0)=\frac{1}{L-1}\sum_{\ell\neq \ell_0}{\mathcal{N}(\phi_m|\bq^T\bW_\mathsf{k}\bmu_{\ell},\sigma_\mathsf{s}^2)}.
\end{equation}
Substituting the conditional distributions of $\phi_m$ into \eqref{eqn: posterior_bayes} yields the expression \eqref{eqn: post_prob}, which completes the proof.
\subsection{Proof of Lemma 4}
The upper bound is trivial by applying Cauchy-Schwarz inequality on $\sum_{m\in\mathcal{S}}e_m \Psi_m$. The lower bound is established by
    \begin{align}
        \sqrt{\frac{1}{\sum_{m\in\mathcal{S}}e_m^2}}\sum_{m\in\mathcal{S}}e_m \Psi_m\geq \sqrt{\frac{1}{\sum_{m\in\mathcal{S}}e_{\max}^2}}e_{\max} \Psi_{\max}\nonumber\\=\sqrt{\frac{1}{|\mathcal{S}|}} \Psi_{\max}\geq \frac{1}{M}\sqrt{\sum_{m\in\mathcal{S}}\Psi_m^2}.
    \end{align}
\bibliographystyle{IEEEtran}
\bibliography{Edge-Inference}

\end{document}